\newtheorem{theorem}{Theorem}[section]
\newtheorem{corollary}[theorem]{Corollary}
\newtheorem{lemma}[theorem]{Lemma}
\newtheorem{observation}[theorem]{Observation}
\newcommand{\conv}{\ensuremath{\Gamma}}
\newcommand{\hedge}{half-edge}
\newcommand{\hedges}{half-edges}
\newcommand{\smatching}{switch-matching}
\newcommand{\smatchings}{switch-matchings}
\newcommand{\svertex}{switch-vertex}
\newcommand{\svertices}{switch-vertices}
\newcommand{\invariant}{color-invariant}
\newcommand{\wc}{well-colored}
\newcommand{\cvisible}{color-visible}
\DeclareMathOperator{\poly}{poly}
\DeclareMathOperator{\fint}{int}
\begin{document}

\title{Linear transformation distance for bichromatic
  matchings\footnote{Research of Oswin Aichholzer supported by the ESF EUROCORES programme EuroGIGA -- CRP `ComPoSe', Austrian Science Fund (FWF): I648-N18.
Research of Thomas Hackl supported by the Austrian Science Fund (FWF): P23629-N18 `Combinatorial Problems on Geometric Graphs'.
Alexander Pilz is a recipient of a DOC-fellowship of the Austrian Academy of Sciences at the Institute for Software Technology, Graz University of Technology, Austria.}}

\author{Oswin Aichholzer\thanks{Institute for Software Technology, Graz University of Technology, Graz, Austria, {\tt [oaich|thackl|apilz|bvogt]@ist.tugraz.at}} \and
 Luis Barba\thanks{D\'epartement d'Informatique, Universit\'e Libre de Bruxelles, Brussels, Belgium, {\tt lbarbafl@ulb.ac.be}}\ \thanks{School of Computer Science, Carleton University, Ottawa, Canada} \and Thomas Hackl$^\dag$ \and Alexander Pilz$^\dag$ \and Birgit Vogtenhuber$^\dag$}

\date{}

\maketitle
\thispagestyle{empty}
\begin{abstract}
Let $P=B\cup R$ be a set of $2n$ points in general position, where $B$ is a set of $n$ blue points and $R$ a set of $n$ red points.
A \emph{$BR$-matching} is a plane geometric perfect matching on $P$ such that each edge has one red endpoint and one blue endpoint.
Two $BR$-matchings are compatible if their union is also plane. 

The \emph{transformation graph of $BR$-matchings} contains one node for each $BR$-matching and an edge joining two such nodes if and only if the corresponding two $BR$-matchings are compatible.
In SoCG 2013 it has been shown by Aloupis, Barba, Langerman, and Souvaine that this transformation graph is always connected, but its diameter remained an open question.
In this paper we provide an alternative proof for the connectivity of the transformation graph and prove an upper bound of $2n$ for its diameter, which is asymptotically tight.
\end{abstract}

\section{Introduction}

A \emph{geometric graph} $G(S,E)$ on a point set $S$ in the plane is an embedding of a graph with the point set $S$ as its vertex set and all edges embedded as straight line segments.
$G(S,E)$ is called \emph{plane} (or \emph{crossing-free}) if no two of its edges %
share a point except for a possible common endpoint.
A plane geometric graph is also called ``planar straight-line graph'' (PSLG for short).
Two plane geometric graphs $G_1(S,E_1)$ and $G_2(S,E_2)$ on the same point set are called \emph{compatible} if the union of their edge sets gives a plane geometric graph $G(S,E_1\cup E_2)$, and \emph{disjoint} if $E_1\cap E_2$ is empty.
Let $P$ be a set of $2n$ points in the plane such that $P$ does not contain three points on a common line, that is, $P$ is in general position.
A \emph{plane geometric matching} on $P$ is a plane geometric graph where each vertex is incident to at most one edge.
In the following, we refer to plane geometric matchings just as \emph{matchings}.
A matching on $P$ is called \emph{perfect} if each vertex is incident to exactly one edge, that is, the number of edges in the matching is $n$.

The concept of matchings has a long history of research, so here we survey only briefly some of the most recent results.
Sharir and Welzl~\cite{Sharir2006} provided bounds on the number of perfect matchings, all matchings (not necessarily perfect), and other variations of matchings that exist on a set $P$.
Aichholzer et al.~\cite{Aichholzer2009617} formulated the \emph{Disjoint Compatible Matching Conjecture} which was then proved by Ishaque et al.~\cite{Ishaque2011}:
For every perfect matching with an even number of edges there exists a disjoint compatible perfect matching.
In a slightly different direction, the compatibility of perfect matchings and different classes of plane geometric graphs is investigated.
In~\cite{CompatibleMatchings2011} it is shown that for outerplanar graphs there always exists a compatible perfect matching.
Further, upper and lower bounds are given on the number of edges shared between the given plane geometric graph and a compatible perfect matching, in case the graph is either a tree or a simple polygon.

Let $M$ and $M'$ be two perfect matchings on $P$.
According to~\cite{Aichholzer2009617} a \emph{transformation
  of length~$k$ between $M$ and $M'$} is a sequence of perfect
matchings $M=M_0,\ldots,M_k=M'$ such that $M_{i-1}$ and $M_{i}$ are
compatible for all $1\leq i\leq k$.
Let the \emph{transformation graph (of perfect matchings on $P$)} be the graph containing one node for each
perfect matching on $P$ and an edge joining two such nodes if and only
if the corresponding two perfect matchings are compatible, that is, there
exists a transformation of length~1 between these two perfect matchings.
Aichholzer et al.~\cite{Aichholzer2009617} proved that there
always exists a transformation of length~$O(\log n)$ between any two
matchings of $P$.
Hence, the transformation graph is connected with diameter~$O(\log n)$.
Providing a lower bound for the diameter, Razen~\cite{Razen2008} proved that there exist point sets $P$ such that the transformation graph (of~$P$) has diameter $\Omega(\log n/ \log \log n)$.

Given the wide interest in work on bichromatic point sets (see \cite{redbluesurvey} for a survey) it is only natural to extend the questions on matchings into that direction.
For the rest of this paper let $P= B\cup R$ be a bichromatic set of $2n$ points in the plane in general position, where $|B|=|R| = n$.
We call $B$ the set of blue points and $R$ the set of red points.
An edge of a geometric graph on $P$ is called \emph{bichromatic} if one endpoint of the edge is in $B$ and the other endpoint is in $R$.
A geometric graph is bichromatic, if all its edges are bichromatic.
For brevity, and in accordance with~\cite{bichromaticMatchings}, a perfect matching $M$ on $P$ is termed a \emph{$BR$-matching} if $M$ is bichromatic, that is, all edges of $M$ are bichromatic.

It is well known that a $BR$-matching always exists for any set $P$ as defined above.
For proofs see, e.g., \cite[p.~51]{larson183} (using the ``minimum weight is plane'' argument) and  \cite[pp.~200--201]{larson183} (using the intermediate value theorem).
On every set $P$ there also always exists a $BR$-matching constructed by repeated application of a ``ham-sandwich cut'' (see \figurename~\ref{fig:CanonicalMatching}).
We use such a $BR$-matching as the canonical structure (following the lines of~\cite{bichromaticMatchings}) and thus describe this in more detail in Section~\ref{section:Ham-Sandwich matchings}.
Concerning the maximal number of $BR$-matchings (over all sets $P$ with $|P|=2n$),  
Sharir and Welzl~\cite{Sharir2006} proved that it is at most $O(7.61^{2n})$ and can be bounded from below by $\Omega(2.23^{2n}/\poly(n))$ (where $\poly(n)$ stands for a polynomial factor in~$n$).

In a different direction, the augmentation of a disconnected bichromatic plane geometric graph with no isolated vertices to a connected bichromatic plane geometric graph has been considered.
The resulting connected (bichromatic) plane geometric graph is often called ``(bichromatic) encompassing graph''.
Hurtado et al.~\cite{Hurtado200814} proved that such an augmentation is always possible and provided an $O(n\log n)$ time algorithm to construct one.
This implies as a special case that every $BR$-matching can be augmented to a bichromatic plane spanning tree in $O(n\log n)$ time.
The result was extended by Hoffmann and T{\'o}th~\cite{HoffmannToth2013} to augmenting bichromatic geometric plane graphs
to bichromatic encompassing graphs where the increase of the degree of each vertex during the augmentation is bounded by two.
Thus, any $BR$-matching can be augmented to a bichromatic plane spanning tree with bounded degree three.
In a similar line of research Aichholzer et al.~\cite{CompatibleMatchingsForPSLG} proved that for every $BR$-matching there exists a bichromatic disjoint compatible matching $M'$ on $P$ with at least $\lceil\frac{n-1}{2}\rceil$ edges.
Furthermore, for an upper bound they provided an example where $M'$ has at most $3n/4$ edges.

Let $M$ and $M'$ be two $BR$-matchings.
Similar to the uncolored setting, a \emph{transformation of length~$k$ between $M$ and $M'$} is a sequence of $BR$-matchings $M=M_0,\ldots,M_k=M'$ such that $M_{i-1}$ and $M_{i}$ are compatible for all $1\leq i\leq k$.
The \emph{transformation graph} ${\cal M}_{BR}$ (of $BR$-matchings) is the graph containing one node for each $BR$-matching and an edge joining two such nodes if and only if the corresponding two $BR$-matchings are compatible.
Aloupis et al.~\cite{bichromaticMatchings} recently answered a question posed in~\cite{CompatibleMatchingsForPSLG}, proving that ${\cal M}_{BR}$ is connected for every point set $P=B\cup R$.
They presented a linear lower bound example for the maximum of the diameter of ${\cal M}_{BR}$ over all $P$. However, they provided no upper bound other than the trivial exponential bound stemming from the maximal number of nodes of ${\cal M}_{BR}$.

By adapting the approach and some of the tools presented in~\cite{bichromaticMatchings} we give an alternative proof of the connectivity of ${\cal M}_{BR}$.
A detailed analysis of each step of this proof allows us to prove %
an upper bound of $2n$ for the diameter of ${\cal M}_{BR}$. This is asymptotically tight, as there exist point sets $P$ for which ${\cal M}_{BR}$ has diameter $n/2$ (see \cite{bichromaticMatchings} and \figurename~\ref{fig:LowerBound}).

\section{The main result}\label{section:Ham-Sandwich matchings}

The main result of this paper is an asymptotically tight upper bound on the diameter of the transformation graph ${\cal M}_{BR}$ of $BR$-matchings,
derived by an alternative proof of the connectivity of ${\cal M}_{BR}$.
To this end, we define a canonical $BR$-matching and show that there exists a transformation of linear length between any $BR$-matching and the canonical one.

Throughout this paper, a \emph{ham-sandwich cut} of $P$ is a straight line $\ell$ such that (1) exactly $\lfloor \frac{n}{2}\rfloor$ blue and $\lfloor \frac{n}{2}\rfloor$ red points of $P$ are on one side of $\ell$ and~(2) exactly $\lceil \frac{n}{2}\rceil$ blue and $\lceil \frac{n}{2}\rceil$ red points of $P$ are on the other side of $\ell$, which implies that $\ell$ does not contain any point of $P$.
(Recall that we assume general position on $P$.)
For even $n$ this definition matches the ``classical'' definition for a ham-sandwich cut.
By the so-called \emph{Ham-sandwich Theorem} such a ham-sandwich cut always exists.
See~\cite{Borsuk33},~\cite{Goodman:1997:HDC:285869},~\cite{MatousekSteiger94}, and~\cite[Chapter~3]{MatousekBorsukUlam} for detailed information.
Furthermore, it is known that a ham-sandwich cut can be computed in $O(n)$ time~\cite{MatousekSteiger94}.
For odd $n$ a ``classical'' ham-sandwich cut $\ell_c$ of $P$ would contain a red and a blue point (on $\ell_c$).
We can shift $\ell_c$ slightly in parallel to achieve a ham-sandwich cut as defined above.

We construct a $BR$-matching $H$ by recursively applying ham-sandwich
cuts until in any cell there remain only two points, one of each
color, which are then matched (see \figurename~\ref{fig:CanonicalMatching}).
Recall that this is always possible by the Ham-sandwich Theorem.
In accordance with~\cite{bichromaticMatchings} we call $H$ a \emph{ham-sandwich matching}.
Note that several different ham-sandwich matchings might exist on $P$ and that, in general, not every $BR$-matching is a ham-sandwich matching.
Further, there exist point sets $P$ that admit only one single $BR$-matching, which then is a ham-sandwich matching.

\begin{figure}[tb]
\centering
\includegraphics{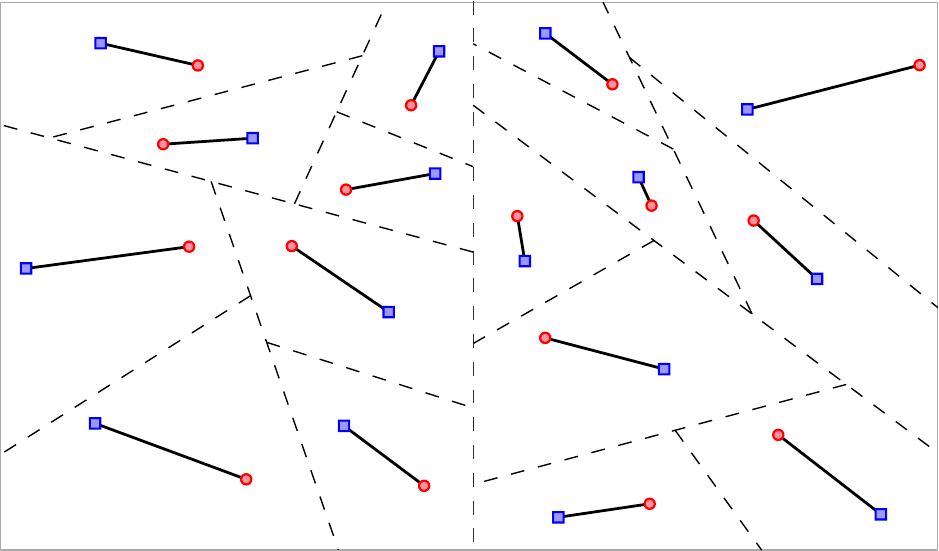}
\caption{\small \cite{bichromaticMatchings} A ham-sandwich matching obtained by repeated application of ham-sandwich cuts.
In our figures we depict blue points as filled squares and red points as filled disks.}
\label{fig:CanonicalMatching}
\end{figure}

One important ingredient for proving our main result (Theorem~\ref{theorem:The_graph_is_connected}) is Lemma~\ref{lemma:CompatibleSequence} stated below.
A similar result was obtained in~\cite{bichromaticMatchings} using comparable methods. However, that result did not permit to prove an upper bound on the diameter of ${\cal M}_{BR}$ (other than the trivial exponential one).
To not disrupt the train of thought we defer the proof of Lemma~\ref{lemma:CompatibleSequence} to Section~\ref{section:Avoiding_ham-sandwich_cuts}, as the remainder of this paper provides the tools for this proof.
Two $BR$-matchings $M$ and $M'$ are said to be \emph{$t$-compatible} if there exists a transformation of length~$k$ between $M$ and $M'$, with $k\leq t$.

\begin{lemma}\label{lemma:CompatibleSequence}
Let $P= B\cup R$ be a bichromatic set of $2n$ points in the plane in general position such that $|B|=|R|=n$. 
For every $BR$-matching $M$ and every ham-sandwich cut $\ell$ of $P$, there exists a $BR$-matching $M^\ell$ such that $M$ and $M^\ell$ are $\lfloor n/2 \rfloor$-compatible and no edge of $M^\ell$ intersects $\ell$.
\end{lemma}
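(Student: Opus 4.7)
The plan is to prove Lemma~\ref{lemma:CompatibleSequence} by induction on the number of edges of $M$ crossing $\ell$, arranging that each compatibility step removes exactly two crossing edges. First I would observe that this number is always even: every crossing edge is either of \emph{type A} (blue on side~$1$ and red on side~$2$) or of \emph{type B} (red on side~$1$ and blue on side~$2$). Because each side of the ham-sandwich cut contains equal numbers of blue and red points, and because the bichromatic internal edges on each side consume equal numbers of each color, the counts of type-A and type-B crossings must coincide. Calling this common value $k$, the total number of crossings is $2k$, and since $k$ cannot exceed the number of blue (equivalently, red) points on either side, $k\leq\lfloor n/2\rfloor$.

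For the inductive step I would order the crossing edges $e_1,\ldots,e_{2k}$ along $\ell$. Since types A and B each appear $k$ times, the sequence of types is not constant, so there is some index $i$ with $e_i$ and $e_{i+1}$ of opposite types. Writing $e_i=u_1u_2$ and $e_{i+1}=v_1v_2$ with $u_1,v_1$ on side~$1$, the opposite-type condition makes $\{u_1,v_1\}$ and $\{u_2,v_2\}$ bichromatic pairs. The \emph{swap} then replaces $e_i,e_{i+1}$ by the two bichromatic segments $e^1=u_1v_1$ and $e^2=u_2v_2$, which lie entirely on side~$1$ and side~$2$ respectively and therefore avoid $\ell$.

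To check compatibility I would study the simple quadrilateral $Q$ bounded by the side-$1$ portions of $e_i,e_{i+1}$, the segment of $\ell$ between their intersection points, and the candidate edge $e^1$. Planarity of $M$ rules out any $M$-edge crossing $e_i$ or $e_{i+1}$; since $e_i,e_{i+1}$ are adjacent crossings along $\ell$, no $M$-edge crosses $\ell$ inside the indicated segment; and $e^1$ stays on side~$1$. Consequently, provided the interior of $Q$ contains no point of $P$, any edge of $M\setminus\{e_i,e_{i+1}\}$ crossing $e^1$ would have to enter and leave $Q$ through $e^1$ twice, which is impossible for a straight segment in general position. A symmetric argument for the quadrilateral $Q'$ on side~$2$ handles $e^2$. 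Granted these emptiness conditions, the resulting matching $M'$ is a $BR$-matching compatible with $M$ with two fewer $\ell$-crossings, and induction yields a transformation of length at most $k\leq\lfloor n/2\rfloor$.

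The main obstacle is ensuring the existence of an opposite-type adjacent pair whose quadrilaterals $Q$ and $Q'$ are both devoid of points of $P$. I expect the deferred tools of Section~\ref{section:Avoiding_ham-sandwich_cuts} to handle this through an extremality/descent argument: any vertex of $P$ inside $Q$ must be matched \emph{internally} to another vertex inside $Q$ (since an internal edge cannot exit $Q$ through any of its three ``physical'' sides), so the obstruction itself furnishes a strictly smaller bichromatic instance nested inside $Q$ from which one can extract a locally empty swap. Iterating this descent finitely many times locates a good pair and completes the inductive step.
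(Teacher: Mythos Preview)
Your counting argument is sound: the crossings split into equal numbers of type~A and type~B, so the total is $2k$ with $k\le\lfloor n/2\rfloor$, and reducing by two per step would give the desired bound. The difficulty is entirely in the ``swap'' step, and here there is a genuine gap.

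Your claim that ``any vertex of $P$ inside $Q$ must be matched internally to another vertex inside $Q$'' is false. The quadrilateral $Q$ has \emph{four} sides: the two pieces of $e_i$ and $e_{i+1}$, the segment of $\ell$, and the candidate edge $e^1=u_1v_1$. Planarity forbids an $M$-edge from leaving $Q$ through the first two sides, and adjacency of the crossings forbids it from leaving through the $\ell$-side, but nothing prevents it from leaving through $e^1$. Concretely, a point $p\in P$ inside $Q$ may be matched in $M$ to a point $p'$ on side~1 that lies outside $Q$; the edge $pp'$ then crosses $e^1$, so $M'=(M\setminus\{e_i,e_{i+1}\})\cup\{e^1,e^2\}$ is not compatible with $M$. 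Thus the obstruction does not yield a self-contained bichromatic sub-instance inside $Q$, and the descent you sketch has no starting point. One can construct configurations in which \emph{every} adjacent opposite-type pair has such an obstructing edge on at least one side, so no purely local swap of two crossing edges suffices.

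The paper's proof is accordingly global rather than local. It first glues consecutive \cvisible\ intersection points $x_i,x_{i+1}$ along $\ell$ (this already covers $\ell$ between at least two such pairs, one being $x_0x_1$ and one coming from an opposite-type adjacency as you noted), then extends every edge of $M$ to obtain a convex decomposition $G_M^n$ of~$\Gamma$, maintaining a ``\invariant'' that keeps the number of faces crossing $\ell$ at most $k-1$. Inside each convex face the points of $P$ appear as \svertices\ that alternate in colour, and one rematches them face by face so that at most one new edge per face crosses~$\ell$. Parity then gives at most $k-2$ crossings. The key difference is that the compatible matching $M'$ is built from scratch inside the faces of a decomposition that already respects all of $M$, rather than by trying to fix two edges of $M$ while leaving the rest untouched; this is what circumvents the obstruction you ran into.
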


Using this lemma, we obtain our main result.

\begin{theorem}\label{theorem:The_graph_is_connected}
Let $P= B\cup R$ be a bichromatic set of $2n$ points in the plane in general position such that $|B|=|R|=n$. 
For every $BR$-matching $M$ and every ham-sandwich matching $H$ of $P$, $M$ and $H$ are $n$-compatible.
\end{theorem}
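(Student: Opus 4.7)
The plan is to prove this by strong induction on $n$, exploiting the recursive nature of ham-sandwich matchings and using Lemma~\ref{lemma:CompatibleSequence} as the main engine. The base case $n=1$ is trivial, since then $P$ admits only one $BR$-matching (the single bichromatic edge) and $M=H$, so they are $0$-compatible.

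For the inductive step, recall that $H$ is produced by first applying a ham-sandwich cut $\ell_0$ to $P$, splitting it into $P_1$ (with $\lfloor n/2\rfloor$ blue and $\lfloor n/2\rfloor$ red points) and $P_2$ (with $\lceil n/2\rceil$ of each), and then recursing. In particular, $H=H_1\cup H_2$, where $H_i$ is a ham-sandwich matching of $P_i$, and no edge of $H$ crosses $\ell_0$. Apply Lemma~\ref{lemma:CompatibleSequence} to $M$ with the cut $\ell_0$: this yields a $BR$-matching $M^{\ell_0}$ that is $\lfloor n/2\rfloor$-compatible with $M$ and whose edges all avoid $\ell_0$. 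Hence $M^{\ell_0}$ decomposes as $M_1^{\ell_0}\cup M_2^{\ell_0}$, where $M_i^{\ell_0}$ is a $BR$-matching on $P_i$.

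Now, by the inductive hypothesis applied on each side, $M_1^{\ell_0}$ is $\lfloor n/2\rfloor$-compatible with $H_1$ via some sequence on $P_1$, and $M_2^{\ell_0}$ is $\lceil n/2\rceil$-compatible with $H_2$ via some sequence on $P_2$. I would then combine these two transformations in parallel, step by step, padding the shorter sequence by repeating its last matching. The key observation that makes this combination legal is that $\ell_0$ separates $P_1$ from $P_2$: for any $BR$-matchings $N_1$ on $P_1$ and $N_2$ on $P_2$, the union $N_1\cup N_2$ is automatically a plane $BR$-matching on $P$ (no edge on one side can meet any edge on the other), and if $N_i\to N_i'$ are compatible on each side, then $(N_1\cup N_2)\to(N_1'\cup N_2')$ is compatible on $P$, since the two sides remain geometrically disjoint. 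Thus $M^{\ell_0}$ and $H$ are $\lceil n/2\rceil$-compatible, and concatenating with the initial transformation gives a transformation from $M$ to $H$ of length at most $\lfloor n/2\rfloor+\lceil n/2\rceil=n$.

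The chief obstacle here is really the bookkeeping of the parallel-step argument—one must be sure that ``doing a step on each side simultaneously'' produces a legal compatible pair of $BR$-matchings on the whole of $P$—but this follows cleanly from the geometric separation guaranteed by $\ell_0$. Once this is in place, the recursion $f(n)\leq \lfloor n/2\rfloor+\max(f(\lfloor n/2\rfloor),f(\lceil n/2\rceil))$ with base value $f(1)=0$ immediately yields $f(n)\leq n$.
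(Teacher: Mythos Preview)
Your proof is correct and follows essentially the same approach as the paper's: induction on $n$, apply Lemma~\ref{lemma:CompatibleSequence} with the first ham-sandwich cut of $H$ to get $M^{\ell_0}$, then recurse on each side and run the two resulting transformations in parallel (the paper calls this ``merging''), yielding the bound $\lfloor n/2\rfloor+\lceil n/2\rceil=n$. Your write-up is slightly more explicit about the parallel-padding step and uses the sharper side-specific bounds $\lfloor n/2\rfloor$ and $\lceil n/2\rceil$ rather than $\lceil n/2\rceil$ for both, but the argument is the same.
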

\begin{proof}
We prove the statement by induction on $n$.
Trivially, the claim is true for $n=1$.
Hence, we proceed with the induction step and assume that the claim is true for any $1 \leq n'<n$.

Let $\ell$ be the first ham-sandwich cut in the construction of~$H$, i.e., a ham-sandwich cut of $P$.
By Lemma~\ref{lemma:CompatibleSequence}, there is a $BR$-matching $M^\ell$ such that $M$ and $M^\ell$ are $\lfloor n/2 \rfloor$-compatible and no edge of $M^\ell$ intersects $\ell$.
Let $P_1 = B_1 \cup R_1$ and $P_2 = B_2 \cup R_2$ %
be the subsets of points of $P$ lying to the left and to the right of $\ell$, respectively.
For each $i\in \{1,2\}$, let $M_i^\ell$ and $H_i$ be the subgraphs of $M^\ell$ and $H$, respectively, which are induced by $P_i$.
(Note that $H_1 \cup H_2 = H$ and $M_1^\ell \cup M_2^\ell = M^\ell$ as no edges of $M^\ell$ and $H$ intersect $\ell$.)

Let $\ell_1$ and $\ell_2$ be the ham-sandwich cuts of $P_1$ and $P_2$, respectively, used to construct $H$. 
Because $|P_i|=2n' \leq 2\lceil n/2 \rceil < 2n$, %
$M_i^\ell$ and $H_i$ are $\lceil n/2 \rceil$-compatible by induction.
Moreover, observe that every $B_1R_1$-matching is compatible with (and disjoint from) every $B_2R_2$-matching.
Thus, the two transformations of length~$k_i$ between $M_i^\ell$ and $H_i$ ($k_i\leq \lceil n/2 \rceil$) can be ``merged'' (i.e., executed in parallel) to one transformation of length~$\max_i\{k_i\}$ between $M^\ell$ and~$H$.
Finally, as $M$ and $M^\ell$ are $\lfloor n/2 \rfloor$-compatible and $M^\ell$ and $H$ are $\lceil n/2 \rceil$-compatible, we conclude that $M$ and $H$ are $n$-compatible.
\end{proof}

\begin{corollary}\label{corollary:upperbound}
Let $P= B\cup R$ be a bichromatic set of $2n$ points in the plane in general position such that $|B|=|R|=n$. 
The transformation graph ${\cal M}_{BR}$ is connected with diameter at most~$2n$.
\end{corollary}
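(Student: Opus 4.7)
The plan is to use Theorem~\ref{theorem:The_graph_is_connected} as a black box and invoke a ham-sandwich matching as a ``hub'' that every $BR$-matching can reach quickly. First I would observe that by the Ham-sandwich Theorem (and the recursive construction described in Section~\ref{section:Ham-Sandwich matchings}), the point set $P$ admits at least one ham-sandwich matching $H$; fix any such $H$.

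Next, given an arbitrary pair $M, M'$ of $BR$-matchings, I would apply Theorem~\ref{theorem:The_graph_is_connected} twice: once to $M$ and $H$, yielding a transformation $M = M_0, M_1, \ldots, M_k = H$ of length $k \le n$, and once to $M'$ and $H$, yielding a transformation $M' = M'_0, M'_1, \ldots, M'_{k'} = H$ of length $k' \le n$. Concatenating the first sequence with the reverse of the second (which is itself a valid transformation, since compatibility is symmetric) gives a walk $M = M_0, \ldots, M_k = H = M'_{k'}, \ldots, M'_0 = M'$ in $\mathcal{M}_{BR}$ of length $k + k' \le 2n$.

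Since such a walk exists between every pair of nodes, $\mathcal{M}_{BR}$ is connected and its diameter is at most $2n$, which is the claim. No real obstacle is anticipated: the only things to verify are that a ham-sandwich matching exists (handled in Section~\ref{section:Ham-Sandwich matchings}) and that compatibility is a symmetric relation, which is immediate from the definition via unions of edge sets.
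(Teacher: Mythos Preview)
Your proposal is correct and is precisely the intended argument: the paper states Corollary~\ref{corollary:upperbound} without an explicit proof because it follows immediately from Theorem~\ref{theorem:The_graph_is_connected} by routing both matchings through a fixed ham-sandwich matching, exactly as you describe. The only ingredients needed---existence of a ham-sandwich matching and symmetry of compatibility---are already established in the paper, so there is nothing to add.
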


\begin{figure}[htb]
\centering
\includegraphics{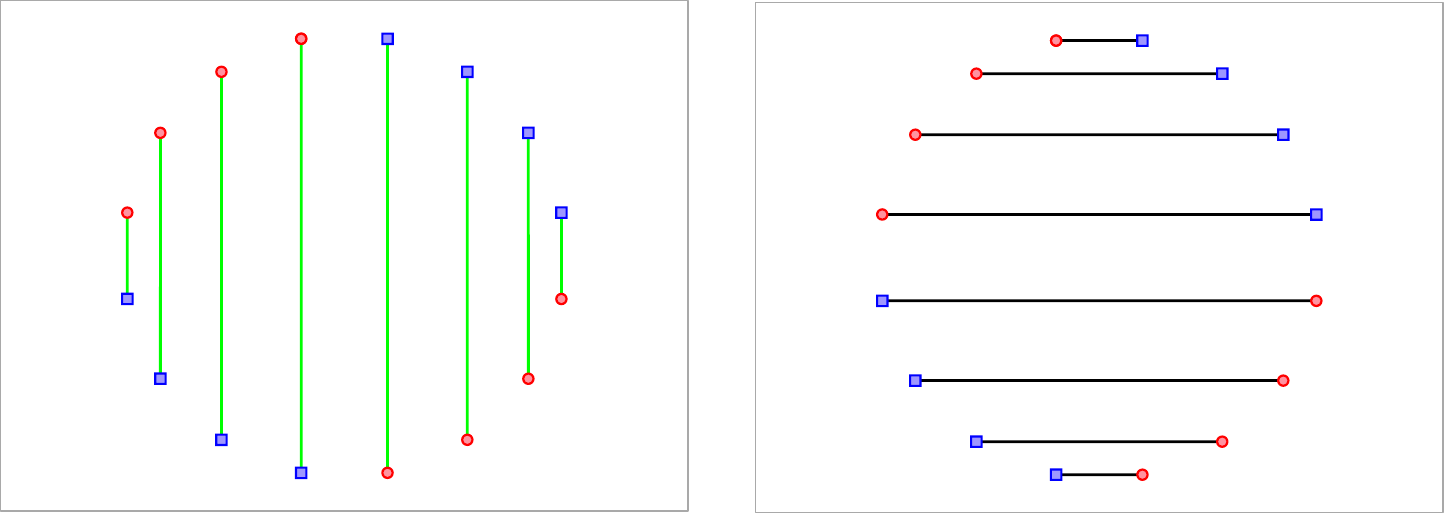}
\caption{\small \cite{bichromaticMatchings} Two ham-sandwich matchings that have distance $n/2$ in the transformation graph.}
\label{fig:LowerBound}
\end{figure}

The example depicted in \figurename~\ref{fig:LowerBound}, which has also been presented in~\cite{bichromaticMatchings}, shows that the diameter of the transformation graph ${\cal M}_{BR}$ can be as high as $n/2$. Together with Corollary~\ref{corollary:upperbound}, we obtain the following result.

\begin{corollary}\label{corollary:G_P_connected}
The maximum over all bichromatic sets $P= B\cup R$ with $|B|=|R|=n$ of the diameter of the transformation graph ${\cal M}_{BR}$ is $\Theta(n)$.
\end{corollary}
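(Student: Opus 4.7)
The plan is to combine the upper and lower bounds that are already essentially in place. For the upper side, Corollary~\ref{corollary:upperbound} directly gives that for every bichromatic set $P=B\cup R$ with $|B|=|R|=n$ in general position, the diameter of ${\cal M}_{BR}$ is at most $2n$. Taking the maximum over all such point sets, this yields an $O(n)$ upper bound. So on this side there is nothing to do beyond quoting Corollary~\ref{corollary:upperbound}.

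For the matching $\Omega(n)$ lower bound, I would appeal to the construction depicted in Figure~\ref{fig:LowerBound}, originating from~\cite{bichromaticMatchings}. For every $n$ this exhibits a specific bichromatic point set $P$ together with two $BR$-matchings (in fact two ham-sandwich matchings) whose distance in ${\cal M}_{BR}$ is at least $n/2$. This is enough to conclude that the maximum of the diameter over all $P$ with $|B|=|R|=n$ is at least $n/2$. Since the construction and its analysis are from~\cite{bichromaticMatchings}, no further work is required, beyond citing the reference and pointing to the figure.

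Putting the two sides together, the maximum diameter is simultaneously bounded above by $2n$ and below by $n/2$, and is therefore $\Theta(n)$, as claimed. Because both ingredients are already available (the upper bound from Corollary~\ref{corollary:upperbound} and the lower bound from~\cite{bichromaticMatchings}), the proof is essentially a one-line combination and presents no real obstacle; the only substantive work was in establishing the upper bound via Lemma~\ref{lemma:CompatibleSequence} and Theorem~\ref{theorem:The_graph_is_connected}, which is done earlier in the paper.
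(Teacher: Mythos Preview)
Your proposal is correct and matches the paper's approach exactly: the paper does not give a separate proof of this corollary but simply states that the lower-bound example of Figure~\ref{fig:LowerBound} from~\cite{bichromaticMatchings} (diameter at least $n/2$) together with Corollary~\ref{corollary:upperbound} (diameter at most $2n$) yields the $\Theta(n)$ bound. There is nothing to add.
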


Note that the lower bound for the diameter of ${\cal M}_{BR}$ is $0$, as there exist point sets $P= B\cup R$ with $|B|=|R|=n$ admitting only one $BR$-matching.

\section{Proof of Lemma~\ref{lemma:CompatibleSequence}}

For the remainder of this paper, we consider each edge of a plane geometric graph $G$ %
to have two sides.
Formally, each edge $pq$ of $G$ consists of a pair of \emph{\hedges}, one directed from $p$ to $q$ and the other directed from $q$ to $p$ such that the cycle of each \hedge\ pair is oriented clockwise (see \figurename~\ref{fig:ColoringTheBoundary}~(a)).
Each \hedge\ is colored either red or blue. 
For an edge $pq$ the \hedge\ directed to $p$ is called the \emph{twin} of the \hedge\ directed to $q$, and vice versa.
Let $\ell_{pq}$ be the line supporting the edge $pq$, and being directed from $p$ to $q$.
Only the \hedge\ directed to $q$ is visible from the left side of $\ell_{pq}$ whereas only the \hedge\ directed to $p$ is visible from the right side of~$\ell_{pq}$.
In other words, a \hedge\ is visible only from its left side and has its twin on its right side.
Note that a point $x$ on an open edge $pq$ with differently colored \hedges\ is observed as being red from one side of $\ell_{pq}$, while $x$ appears to be blue from the other side of $\ell_{pq}$ (see again \figurename~\ref{fig:ColoringTheBoundary}~(a)).

\begin{figure}[htb]
\centering
\includegraphics{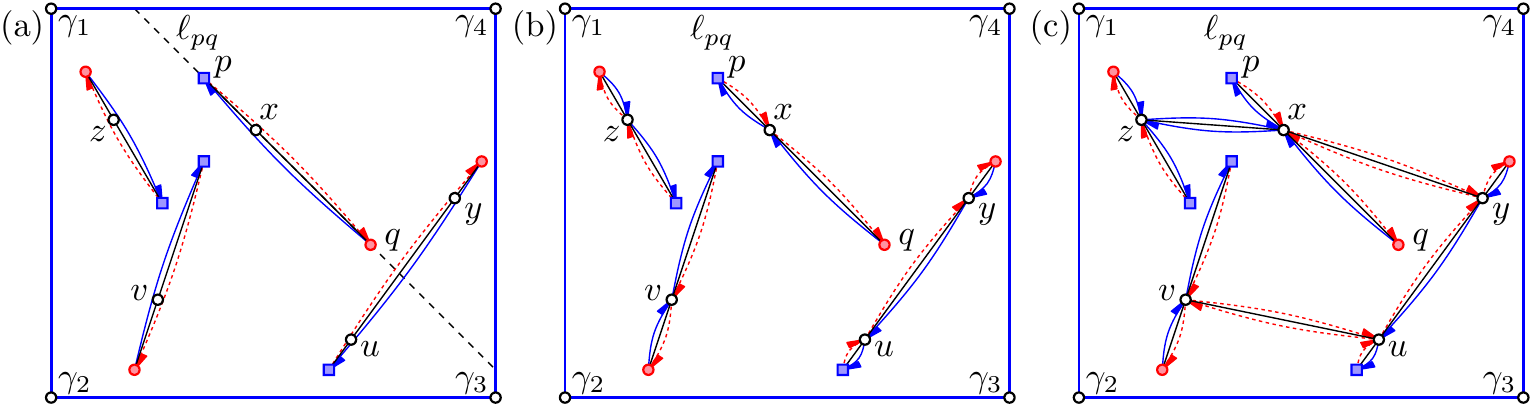}
\caption{\small Splitting and gluing in $P$-graphs: In the figures we show points of $Q$ as white disks, blue \hedges\ as solid arcs, and red \hedges\ as dotted arcs.
The points $\{\gamma_1,\ldots,\gamma_4\}$ are the vertices of the rectangle $\Gamma$.
For simplicity $\Gamma$ is displayed with bold lines instead of equally-colored \hedges.
(a) Each edge $pq$ of $M$ has two \hedges, one \hedge\
  directed to $p$ and colored like $p$, the other directed to $q$ and
  colored like $q$. The points $z$ and $u$ are not visible, $x$ and
  $u$ are visible but not \cvisible, and $x$ and $y$ are \cvisible.
(b) The resulting graph when splitting $pq$ at $x$ and the other edges
at $z$, $v$, $u$, and $y$.
(c) The resulting graph after gluing three pairs of \cvisible{} points
$x,y$, $x,z$, and $u,v$.}
\label{fig:ColoringTheBoundary}
\end{figure}

Let $M$ be a $BR$-matching. %
For each edge $s$ of $M$ color the \hedges\ of $s$ in the same color as the endpoint towards which they are directed to.
In this way, every edge of a $BR$-matching %
has a blue \hedge\ and a red \hedge. Moreover, this coloring is uniquely determined by $P$ (and the fixed orientation of \hedge\ pairs).
Let $\conv$ be an axis aligned rectangle sufficiently large to enclose $M$ in its interior.
We color each \hedge\ on the boundary of $\conv$ with the same color (to be determined later).
See \figurename~\ref{fig:ColoringTheBoundary}~(a) for an illustration where each \hedge\ of $\conv$ is colored blue.

We define a \emph{$P$-graph} (of $M$ and $\Gamma$) to be a plane geometric graph $G_M$ on a point set $P\cup Q$ such that
(1) $Q$ is disjoint from $P$, %
(2) $G_M$ contains a subdivision of $\conv$ and a subdivision of $M$ as subgraphs, 
(3) for every edge of $M$ its \hedges\ are colored as defined above, and
(4) for every edge of $G_M$ that is not an edge of $M$, its two \hedges\ are colored in the same color, either red or blue.
(We do not require $P\cup Q$ to be in general position, but recall that we assume general position of $P$.)
From now on we only consider the part of the plane bounded by $\conv$.
Thus, each considered face $f$ of $G_M$ is bounded.
We denote by $\partial f$ the boundary of~$f$ and by $\fint(f)$ the interior of~$f$.
Furthermore, let the \emph{boundary} of $G_M$, denoted by $\partial G_M$, be the union of all the edges in $G_M$, and let the \emph{interior} of $G_M$ be the union of the interiors of its faces.

Consider two points $x$ and $y$ that lie on different edges of $\partial G_M$.
We say that $x$ and $y$ are \emph{visible} if the open segment joining $x$ with $y$ is contained in the interior of $G_M$.
We say that $x$ and $y$ are \emph{\cvisible} if they are visible and the color of $x$ when viewed from $y$ is equal to the color of $y$ when viewed from $x$.
For example, in \figurename~\ref{fig:ColoringTheBoundary}~(a), $u$ and $x$ are visible but not \cvisible, while $x$ and $y$ are \cvisible.

With these definitions, we first show how to create a $P$-graph of $M$ that is a convex decomposition of the interior of $\conv$.
To this end we define the glue operation, as has been done in~\cite{bichromaticMatchings}, and use a colored version of an extension of a matching (see e.g.~\cite{Aichholzer2009617} for uncolored extension).
Then we show how to construct a $BR$-matching that is compatible to the created convex decomposition and prove that this $BR$-matching has strictly less intersections with a ham-sandwich cut of~$P$ than~$M$.

\subsection{Splitting and gluing in $P$-graphs}

Consider a $P$-graph $G_M$ on $P\cup Q$ and let $x\notin P\cup Q$ be a point on an edge $pq$ of $G_M$.
To \emph{split} $pq$ at $x$ we do the following:
(1) add $x$ to $Q$,
(2) add the edges $px$ and $xq$ to $G_M$,
(3) color the \hedges\ from $p$ to $x$ and from $x$ to $q$ like the
\hedge\ from $p$ to $q$, and the other two new \hedges\ like the \hedge\
from $q$ to $p$, and
(4) remove $pq$ (and its two \hedges) from $G_M$.
\figurename~\ref{fig:ColoringTheBoundary}~(a-b) gives an illustration of the split operation.

We borrow the gluing technique introduced in~\cite{bichromaticMatchings}:
Let $y$ and $y'$ be two \cvisible{} points on two different edges $e$ and $e'$, respectively, of $\partial G_M$ such that neither $y$ nor $y'$ is in $P$.
To \emph{glue} $y$ with~$y'$, we do the following:
If $y$ (or $y'$) is not a vertex of $G_M$, then we split $e$ at $y$ (or $e'$ at $y'$), by this ensuring that $y$ and $y'$ are now vertices of $G_M$.
Then we add the edge $yy'$ to $G_M$ and color the two \hedges\ of $yy'$ with the same color as $y$ when viewed from $y'$.
See \figurename~\ref{fig:ColoringTheBoundary}~(b-c) for examples of gluing.

\begin{observation}
\label{obs:slitandglue}
The resulting graph of splitting an edge of a $P$-graph at a point on this edge is again a $P$-graph.
The resulting graph of gluing two \cvisible{} points (neither of them in $P$) on two different edges of a $P$-graph is again a $P$-graph.
\end{observation}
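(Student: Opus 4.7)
The plan is to verify each of the four defining properties of a $P$-graph separately for each of the two operations, appealing to the very definitions of split and glue together with the hypotheses on visibility and \cvisibility.

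For split, let $G_M$ be a $P$-graph on $P\cup Q$ and let $x\notin P\cup Q$ lie on an edge $pq$ of $G_M$. Adding $x$ to $Q$ preserves property~(1) because $x\notin P$. For property~(2), the edge $pq$ either lies in the subdivision of $\conv$, in the subdivision of $M$, or in neither; in every case the union of segments realizing these subdivisions is unchanged as a point set and is merely refined at~$x$, so both required subdivisions persist as subgraphs. For properties~(3) and~(4) one checks from the coloring rule in the split definition that each original \hedge\ of $pq$ is replaced by two collinear new \hedges\ of the \emph{same} color; thus if $pq$ is a piece of the subdivision of $M$ with its \hedges\ colored according to the endpoints of the original $M$-edge, then $px$ and $xq$ are also pieces of that same $M$-edge and inherit the correct coloring; and if $pq$ was not an $M$-edge, its two \hedges\ had a common color, so all four new \hedges\ share that color and property~(4) holds for both $px$ and $xq$.

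For glue, let $y,y'$ be \cvisible{} points on distinct edges of $\partial G_M$ with $y,y'\notin P$. If either is not already a vertex, we first split, which by the previous paragraph yields a $P$-graph in which $y,y'$ are vertices; since splitting does not alter the interior of $G_M$ or the colors observed at $y$ and $y'$, they remain \cvisible. Adding the edge $yy'$ preserves planarity because \cvisibility\ implies visibility, so the open segment $yy'$ lies in the interior of $G_M$ and hence avoids every existing edge, while its endpoints are existing vertices. Property~(1) is preserved since $y,y'\in Q$ and no new points are created. Property~(2) is preserved because no edge was removed or subdivided beyond the split step. Since $y,y'\notin P$, $yy'$ is not an edge of $M$, so property~(3) is vacuous for $yy'$; and property~(4) for $yy'$ is exactly the content of \cvisibility: the common color $c$ seen at $y$ from $y'$ and at $y'$ from $y$ can be assigned to both \hedges\ of $yy'$.

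The only real obstacle is the bookkeeping with the orientation/visibility convention on \hedges\ (each \hedge\ is visible only from its left side and has its twin on its right), which must be tracked through split and glue so that the new \hedges\ receive the correct color on the correct side. Once that convention is followed, planarity for glue reduces entirely to the ``open segment in the interior'' clause in the definition of visibility, and color-consistency for glue reduces entirely to the ``same observed color'' clause in the definition of \cvisibility; so the observation is essentially a direct unpacking of those definitions.
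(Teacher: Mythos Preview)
The paper states this as an observation without proof, treating it as immediate from the definitions of split and glue; your careful verification of the four defining properties of a $P$-graph for each operation is correct and is exactly the natural justification the authors leave implicit.

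One minor point worth making explicit (you already handle it, but tacitly): after an edge of $M$ has been split, the resulting pieces $px$ and $xq$ are not literally ``edges of $M$'', so properties~(3) and~(4) must be read with ``edge of $M$'' meaning ``edge of $G_M$ lying on an edge of $M$''. Under this reading your check of~(3) for split is correct---each new \hedge\ inherits the color of the original \hedge\ it replaces, hence still points toward the appropriately colored endpoint of the original $M$-edge---and your check of~(4) for the glued edge $yy'$ is immediate from the glue rule, independently of \cvisibility\ (the rule assigns both \hedges\ the single color of $y$ as seen from $y'$). The role of \cvisibility\ is only to make this assignment consistent with what is seen from either side, which matters for later arguments but is not needed for~(4) itself.
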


\begin{figure}[tb]
\centering
\includegraphics{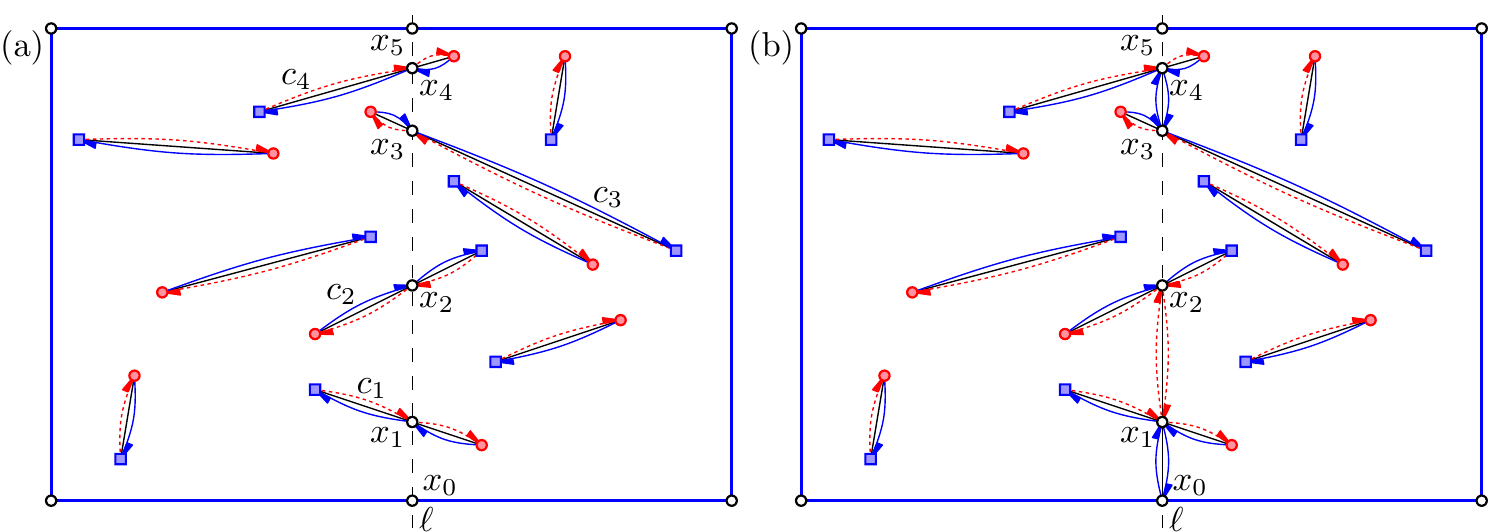}
\caption{\small Generating $G_M^0$: (a) The edges $\langle c_1,\ldots,c_4\rangle$ of $M$ intersect $\ell$ in $\langle x_1,\ldots,x_4\rangle$ and are split at these points. 
(b) The pairs of \cvisible{} points $x_0,x_1$, $x_1,x_2$, and $x_3,x_4$ are glued.}
\label{fig:Gluing the segments}
\end{figure}

Consider a $P$-graph $G_M$ on $P\cup Q$ with $Q$ only containing the four points of $\conv$ and $G_M$ containing only the edges of $M$ and $\conv$.
Let $\ell$ be a ham-sandwich cut of $P$ and assume without loss of generality that $\ell$ is vertical and that no edge of $M$ is parallel to~$\ell$.
Let $C_{M, \ell} = \langle c_1, \ldots, c_k\rangle$ be the sequence of $k$ edges of $M$ that intersect $\ell$, sorted from bottom to top according to the point of intersection $x_i$ of $c_i$ with $\ell$.
Let $x_0$ and $x_{k+1}$ be the intersection points of $\ell$ with the bottom edge and top edge of $\conv$, respectively.
Color each \hedge\ on the boundary of $\conv$ with the same color as $x_1$ when viewed from $x_0$; see \figurename~\ref{fig:Gluing the segments}~(a).
Recall that Lemma~\ref{lemma:CompatibleSequence} looks for a $BR$-matching $M^\ell$, such that $M$ and $M^\ell$ are compatible and $M^\ell$ has no edges intersecting $\ell$.
Therefore, we can assume that $k > 0$ as otherwise we have already found the desired $BR$-matching.
We construct a $P$-graph $G_M^0$ by gluing $x_i$ with $x_{i+1}$, for each $0\leq i\leq k$, if $x_i$ and $x_{i+1}$ are \cvisible. 
By doing so, we ensure that no edge in a $BR$-matching compatible with $G_M^0$ can intersect $\ell$ between $x_i$ and $x_{i+1}$, if $x_i$ and $x_{i+1}$ are \cvisible.
Recall that the \hedges\ on $\conv$ have the color of $c_1$ when viewed from below. 
That is, the points $x_0$ and $x_1$ are \cvisible{} and hence, they are glued together; see \figurename~\ref{fig:Gluing the segments}~(b) for an illustration.

\begin{observation}\label{obs:Consecutive_points_have_different_colors}
Let $M$ be any $BR$-matching on $P$ and let $\ell$ be any ham-sandwich cut of $P$, such that the intersection of $\ell$ with the edges of $M$ is not empty.
There exists a $P$-graph $G_M^0$ such that two points $x_i$ and $x_{i+1}$ are joined by an edge in $G_M^0$ if and only if $x_i$ and $x_{i+1}$ are \cvisible.
Moreover, $x_0$ and $x_1$ are always glued by an edge of $G_M^0$.
\end{observation}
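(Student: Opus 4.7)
The plan is to construct $G_M^0$ by the natural splitting-and-gluing procedure implicit in the text. First I would start from the $P$-graph consisting of $M$ together with $\conv$ (whose boundary \hedges{} are all colored with the color of $x_1$ as seen from $x_0$), split each edge $c_i$ at $x_i$ for $1 \leq i \leq k$, and then sweep $i$ from $0$ to $k$, gluing $x_i$ with $x_{i+1}$ whenever they are \cvisible. By Observation~\ref{obs:slitandglue} every split and every glue preserves the $P$-graph property, so the resulting $G_M^0$ is itself a $P$-graph.

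The main obstacle---and essentially the only nontrivial point---is to verify that the \cvisibility{} status of each consecutive pair $(x_i, x_{i+1})$ is \emph{invariant} under the other gluings, so that the procedure is well-defined and the ``if and only if'' characterization actually holds. For visibility I would use a locality argument: since $c_i$ and $c_{i+1}$ are consecutive edges of $M$ crossing $\ell$, the open segment on $\ell$ between $x_i$ and $x_{i+1}$ lies in the interior of the split $P$-graph; any later glue acts on a different segment of $\ell$ lying entirely outside $x_i x_{i+1}$ and so cannot destroy this visibility. For the color match, the \hedges{} of $c_i$ and $c_{i+1}$ (and, at the extremes $i=0$ or $i=k$, the \hedges{} of $\partial \conv$) incident to $x_i$ and $x_{i+1}$ receive their colors at the outset and are never recolored by a split or a glue, so the colors seen by $x_i$ at $x_{i+1}$ and vice versa are frozen. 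Therefore \cvisibility{} of $(x_i, x_{i+1})$ is a fixed property of the initial split configuration, which is independent of the order in which the gluings are performed, and so $x_i$ and $x_{i+1}$ end up joined by an edge in $G_M^0$ precisely when they are \cvisible.

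Finally, the ``moreover'' claim is rigged into the construction by the choice of boundary coloring: by definition every \hedge\ of $\partial \conv$ has the color of $x_1$ as viewed from $x_0$. In particular, the \hedge\ of the bottom side of $\conv$ visible from above at $x_0$ carries the same color that $x_1$ shows downward to $x_0$, so the two displayed colors match. Combined with the visibility of $x_0$ and $x_1$ along $\ell$ (obvious from the same locality argument as above, using that $c_1$ is the \emph{first} edge of $M$ crossing $\ell$), this makes $(x_0, x_1)$ \cvisible, and hence this pair is always glued in $G_M^0$.
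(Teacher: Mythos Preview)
Your proof is correct and follows exactly the construction the paper gives; in fact the paper states this result as an observation with no separate proof, treating it as immediate from the construction in the preceding paragraph. Your write-up is actually more careful than the paper's in verifying that the \cvisibility{} of each pair $(x_i,x_{i+1})$ is unaffected by gluing the other pairs, a point the paper leaves implicit.
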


\subsection{Extension of $M$}

In this section, we describe the extension of the $BR$-matching $M$ in the $P$-graph $G_M^0$.
Let $s_1, \ldots, s_n$ be an arbitrary order of the edges of $M$.
Starting with $G_M^0$, we extend each edge of $M$ in this order, resulting in a sequence $G_M^0, \ldots, G_M^n$ of $P$-graphs.

During this sequence we maintain the following \emph{\invariant}:
For $0\leq j\leq n$ and every pair of points $u,v \in (\partial G_M^j\cap\ell)$, $u$ and $v$ are not \cvisible.
Intuitively, the \invariant{} guarantees that every interval along $\ell$ that is not covered by an edge of $G_M^j$ is bounded by points having different colors.

\begin{lemma}\label{lemma:Color_invariant_holds_at_the_beginning}
The \invariant{} holds for $G_M^0$.
\end{lemma}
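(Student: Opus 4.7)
The plan is to characterize $\partial G_M^0\cap\ell$ explicitly and then reduce the invariant to the defining property of the gluing step. Only three kinds of edges of $G_M^0$ meet $\ell$: the (possibly split) edges $c_1,\ldots,c_k$, each of which crosses $\ell$ transversally at its unique point $x_i$; the top and bottom edges of $\conv$, which meet $\ell$ exactly at $x_0$ and $x_{k+1}$; and the glue edges, each being a closed segment $[x_i,x_{i+1}]$ lying on $\ell$ for some index $i$ at which the gluing step was carried out. Consequently $\partial G_M^0\cap\ell$ equals the union of the discrete points $\{x_0,x_1,\ldots,x_{k+1}\}$ and of those segments $[x_i,x_{i+1}]$ whose endpoints were glued.

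Next I would fix any two points $u,v\in\partial G_M^0\cap\ell$ lying on distinct edges of $G_M^0$. Since both points lie on $\ell$, the straight segment joining them is contained in $\ell$. The first observation is that, as soon as the open segment $(u,v)$ meets $\partial G_M^0$, we are done: if some $x_j$ lies strictly between $u$ and $v$, or if any piece of a glue segment $[x_j,x_{j+1}]$ does, then $(u,v)$ is not contained in the interior of $G_M^0$, so $u$ and $v$ are not visible and hence not \cvisible. The only remaining situation is that $(u,v)$ is disjoint from $\partial G_M^0$; by the explicit characterization above, this forces $u$ and $v$ to be consecutive intersection points $x_i$ and $x_{i+1}$ for which no gluing was performed. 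But gluing was skipped precisely because $x_i$ and $x_{i+1}$ fail to be \cvisible, and the invariant is established.

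The main subtlety I anticipate is the bookkeeping around shared vertices: a point $x_i$ can belong simultaneously to (the split halves of) $c_i$ and to one or two glue edges, so I need to verify that for every pair $u,v$ of ``non-obvious'' position---say, with $u$ on a glue segment and $v$ at some $x_j$, or $u,v$ on two different glue segments---the open segment $(u,v)$ already touches $\partial G_M^0$ (through an intermediate $x_j$ or a piece of a glue edge), so that visibility fails. Once this short case check is carried out systematically, the only remaining pairs are the consecutive-unglued pairs $(x_i,x_{i+1})$, and the conclusion follows immediately from the gluing rule of Observation~\ref{obs:Consecutive_points_have_different_colors}.
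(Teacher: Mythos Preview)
Your proposal is correct and follows essentially the same approach as the paper's proof: both arguments reduce the invariant to the observation that the only pairs of points in $\partial G_M^0\cap\ell$ that can possibly be visible are consecutive, unglued pairs $x_i,x_{i+1}$, and these fail to be \cvisible{} precisely because the gluing rule (Observation~\ref{obs:Consecutive_points_have_different_colors}) skipped them. Your write-up is more explicit than the paper's---you spell out the full description of $\partial G_M^0\cap\ell$ and flag the case analysis for points lying on glue segments---whereas the paper compresses all of this into the single sentence ``only the points $x_i$ and $x_{i+1}$ can be visible''; but the underlying argument is identical.
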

\begin{proof}
Recall that $x_0$ and $x_{k+1}$ are the intersections of $\ell$ with $\conv$ and that for every $1\leq i\leq k$, $x_i$ is the intersection of the edge $c_i\in C_{M,\ell}$ with the line $\ell$.
For two points to be \cvisible{} they need to be visible.
In $\partial G_M^0\cap\ell$ only the points $x_i$ and $x_{i+1}$, for some $0\leq i\leq k$, can be visible. 
By Observation~\ref{obs:Consecutive_points_have_different_colors}, $x_i$ and $x_{i+1}$ are visible in $G_M^0$ if and only if they are not \cvisible{} in $M$, i.e., the \invariant{} holds.
\end{proof}

We proceed by describing the extension of $M$ in detail.
For each edge $s_j$ of $M$ the \emph{extension} of $s_j$ comes in three
steps:
(1) shooting a ray from $s_j$ to both directions until hitting an edge of $G_M^{j-1}$,
(2) proper coloring of the \hedges\ of the two rays, and
(3) maintaining the \invariant.

\noindent\emph{Step 1:} Let $\ell_s$ be the supporting line of $s_j=pq$.
Let $z_p$ and $z_q$ be the intersection points of $\ell_s$ and $\partial G_M^{j-1}$, such that $p$ and $z_p$ are visible and $q$ and $z_q$ are visible.
Note that such an intersection can be with an edge of $M$, with an edge of $\conv$, or with any other edge of $\partial G_M^{j-1}$.
If any of $z_p$ or $z_q$ is not a vertex of $G_M^{j-1}$ then split the edge containing $z_p$ at $z_p$ or split the edge containing $z_q$ at $z_q$, respectively.
Extend $s_j$ by adding the edges $pz_p$ and $qz_q$ to $G_M^{j-1}$.
See \figurename~\ref{fig:Extensions}~(a) for an example.

\noindent\emph{Step 2:} The two \hedges\ of $pz_p$ are colored with the same
color as $z_p$ when viewed from $p$.
The two \hedges\ of $qz_q$ are colored with the same color as $z_q$ when viewed from $q$; see \figurename~\ref{fig:Extensions}~(b).
By this coloring, the resulting graph is a $P$-graph.

\noindent\emph{Step 3:} Observe that at most one of the two new edges
can intersect $\ell$.
Assume that the \invariant{} holds before processing $s_j$. 
If neither of the two new edges intersects $\ell$, then the \invariant{} still holds after extending $s_j$.
Thus, without loss of generality, assume that $pz_p$ intersects $\ell$ in point $y$.
Let $z_u$ and $z_d$ be first points hit on $\partial G_M^{j-1}$ when shooting upwards and downwards, respectively, from $y$ along $\ell$. 
The \invariant{} guarantees that $z_u$ and $z_d$ are not \cvisible{} in $G_M^{j-1}$.
Hence, $z_u$ and $z_d$ have different colors when viewed from $y$, but $y$ has the same color independent of being viewed from $z_u$ or $z_d$.
Therefore, $y$ and exactly one of the two points $z_u$ and $z_d$ are \cvisible.
We glue $y$ with this \cvisible{} point;  %
see \figurename~\ref{fig:Extensions}~(b) for an example.

\begin{figure}[tb]
\centering
\includegraphics{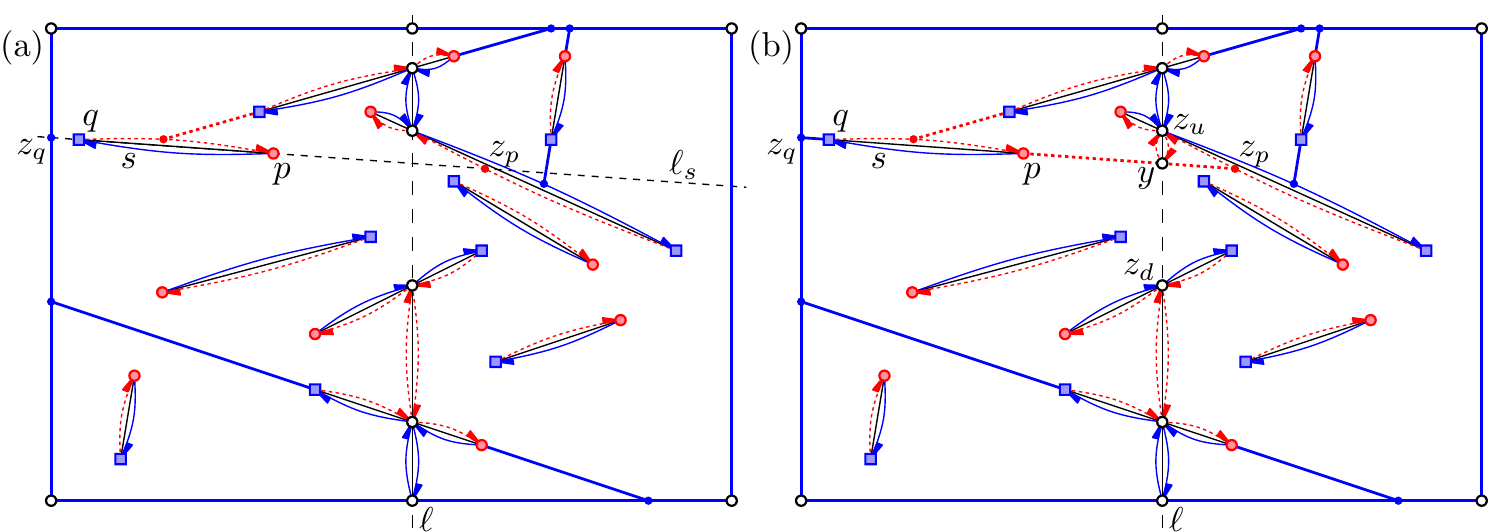}
\caption{\small Extending $M$ in $G_M^0$: For simplification, extensions of edges are displayed with bold lines instead of
equally-colored \hedges\ and endpoints of extensions are depicted as
small dots.
(a) The edge $s=pq$ is extended on its supporting line $\ell_s$, hitting the blue point $z_q$ and the red point $z_p$.
(b) As $z_q$ is blue and $z_p$ is red, the extensions $qz_q$ and $pz_p$ are blue and red, respectively. 
The extension $pz_p$ intersects $\ell$ in the point $y$ inside the interval $z_dz_u$ on $\ell$.
As $y$ and $z_u$ are \cvisible{} (red), they are glued.}
\label{fig:Extensions}
\end{figure}

\begin{lemma}\label{lemma:Color_invariant_preserved}
The \invariant{} is preserved after each extension of an edge of~$M$.
In particular, this invariant holds in the resulting graph $G_M^n$, after extending every edge of~$M$.
\end{lemma}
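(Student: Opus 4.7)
The plan is to prove both statements simultaneously by induction on $j\in\{0,1,\ldots,n\}$. The base case $j=0$ is Lemma~\ref{lemma:Color_invariant_holds_at_the_beginning}. For the inductive step I would assume the \invariant{} in $G_M^{j-1}$ and track how the three-step extension of $s_j=pq$ changes the situation along $\ell$. The guiding observation is that neither splitting an edge (step~1) nor adding an edge (steps~1 and~3) can ever create new visibility between two points of $\partial G\cap\ell$: adding edges only blocks lines of sight, and splitting leaves them intact. Hence every pair of points of $\partial G_M^{j-1}\cap\ell$ that was not \cvisible{} in $G_M^{j-1}$ remains not \cvisible{} in $G_M^j$, and the only pairs I need to examine in detail are those involving a newly introduced point on $\ell$.

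If neither $pz_p$ nor $qz_q$ crosses $\ell$, then $\partial G_M^j\cap\ell$ contains no new points and the invariant is immediate. Otherwise, I may assume without loss of generality that $pz_p$ crosses $\ell$ in a single point $y$. Here the coloring rule of step~2 is crucial: both \hedges{} of $pz_p$ receive the color of $z_p$ viewed from $p$, so $y$ has a \emph{single}, well-defined color, independent of the side of $\ell$ from which it is viewed. Let $z_u,z_d\in\partial G_M^{j-1}\cap\ell$ be the first points above and below $y$ along $\ell$. By their definition, the open segment on $\ell$ from $z_d$ to $z_u$ meets no edge of $G_M^{j-1}$, so $z_u$ and $z_d$ are visible in $G_M^{j-1}$; the inductive hypothesis then forces them to have different colors as seen from $y$. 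Therefore exactly one of them, say $z_u$, matches the color of $y$, and this is precisely the \cvisible{} point with which step~3 glues~$y$.

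It then remains to verify that no new \cvisible{} pair has appeared in $G_M^j$. The glue edge $yz_u$ lies on $\ell$, so $y$ and $z_u$ are no longer visible (the open segment between them is on this edge, not in the interior of $G_M^j$). The only other visible pair involving $y$ along $\ell$ is $y,z_d$, which has different colors and is therefore not \cvisible. All remaining pairs of $\partial G_M^j\cap\ell$ are pairs of $\partial G_M^{j-1}\cap\ell$ and have already been handled by the inductive hypothesis. This closes the inductive step, and applying it for $j=1,\ldots,n$ delivers the second sentence of the lemma as well.

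The main obstacle I expect is the careful bookkeeping in the interesting case: one must check that $y$'s color really is unambiguous (which rests entirely on the step~2 rule giving both \hedges{} of the extension the same color) and that exactly one of $z_u,z_d$ is \cvisible{} with $y$ (which is precisely where the inductive hypothesis is used). Once those two facts are in place, the glue turns the single potential violation of the invariant into a non-violation, and the induction goes through cleanly.
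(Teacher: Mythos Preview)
Your proposal is correct and follows essentially the same approach as the paper: induction on $j$ with base case Lemma~\ref{lemma:Color_invariant_holds_at_the_beginning}, then in the inductive step observing that a violation can arise only from a new intersection point $y$ on $\ell$, using the hypothesis to see that $z_u$ and $z_d$ have different colors as seen from $y$, and concluding that the glue in Step~3 removes the unique potentially \cvisible{} pair. Your write-up is in fact a bit more explicit than the paper's (you spell out why $y$ has an unambiguous color and why adding edges can only destroy visibility), but the argument is the same.
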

\begin{proof}
By Lemma~\ref{lemma:Color_invariant_holds_at_the_beginning}, the \invariant{} holds for $G_M^0$ before extending $s_1$.
We prove by induction and thus assume that the \invariant{} is preserved until extending $s_j$.
Observe that the \invariant{} can only be violated if a new edge (at most one of the two extensions of $s_j$) intersects~$\ell$ in a point~$y$.
If this is the case then $y$ lies between two points $z_d$ and $z_u$ that are visible in~$G_M^{j-1}$.
As argued above, $y$ and exactly one of the two points, without loss of generality $z_d$, are \cvisible{} in~$G_M^{j-1}$.
As $y$ is glued with $z_d$ in Step 3, $y$ and $z_d$ are not visible in $G_M^{j}$.
Furthermore, all other pairs of visible points of $\partial G_M^{j-1}$ on $\ell$ remain unchanged. Thus, the \invariant{} also holds after extending~$s_j$.
\end{proof}

It is easy to see that the resulting $P$-graph $G_M^n$ decomposes the interior of $\conv$ into convex simple polygons, each being a face of $G_M^n$; see \figurename~\ref{fig:Switch_vertices}~(a).
Note that every point in the interior of each face of $G_M^n$ sees a counterclockwise directed cycle of colored \hedges.

In the following two sections we construct a $BR$-matching $M'$ compatible to $G_M^n$.
Recall that the edges of $M'$ should have as few intersections with the ham-sandwich cut $\ell$ as possible.
As $G_M^n$ and $M'$ are compatible, only edges inside a face of $G_M^n$ can intersect $\ell$.
Thus, we are interested in the number of faces of $G_M^n$ that contain a portion of $\ell$ in their interior.
We say that a face $f$ of $G$ \emph{crosses} $\ell$ if $\fint(f)\cap\ell\neq\emptyset$.

\begin{lemma}\label{lemma:Few_faces_crossing_ell}
At most $k-1$ faces of $G_M^n$ cross $\ell$, where $k = |C_{M, \ell}|$.
\end{lemma}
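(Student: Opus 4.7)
Plan.
The plan is to control the number of faces of $G_M^n$ crossing $\ell$ by tracking the maximal open subsegments of $\ell\cap\conv$ contained in the interior of $G_M^n$ (call them \emph{free segments}). Since $G_M^n$ is a convex decomposition, each face crossing $\ell$ contributes exactly one free segment, so bounding the number of free segments by $k-1$ suffices. I will do this in two steps: first bound the count in the initial graph $G_M^0$, then show the count is invariant under each extension.

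For the initial bound, the points $x_0,\ldots,x_{k+1}$ partition $\ell\cap\conv$ into $k+1$ intervals $I_i = (x_i,x_{i+1})$, each either \emph{glued} (an edge of $G_M^0$ on $\ell$) or \emph{free}. By the construction of $\conv$'s boundary coloring, $I_0$ is glued. For $1\le i\le k-1$, $I_i$ is glued iff $x_i$ and $x_{i+1}$ are \cvisible; under the half-edge coloring convention the color visible from above at $x_i$ is that of $c_i$'s right endpoint, while from below at $x_{i+1}$ it is the color of $c_{i+1}$'s left endpoint, so \cvisibility{} reduces to $L_{i+1}\neq L_i$, where $L_i$ denotes the color of the left endpoint of $c_i$. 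The ham-sandwich property of $\ell$ forces the sequence $L_1,\ldots,L_k$ to contain both colors: the left side of $\ell$ has equal numbers of red and blue points, and since the non-crossing left-side edges of $M$ must match bichromatically among themselves, the $k$ crossing left endpoints are also balanced. Hence at least one index yields $L_{i+1}\neq L_i$, giving at least two glued intervals in total and thus at most $k-1$ free segments in $G_M^0$.

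For the invariance, consider extending $s_j=pq$: it adds the edges $pz_p$ and $qz_q$, of which at most one crosses $\ell$. If neither crosses $\ell$, the set of free segments is unaffected. If, say, $pz_p$ crosses $\ell$ at a point $y$ lying inside the free segment $(z_d,z_u)$, then adding $pz_p$ splits this free segment at $y$ into $(z_d,y)$ and $(y,z_u)$ (count $+1$), and the subsequent gluing of $y$ with exactly one of $z_u,z_d$ (say $z_u$) introduces the edge $yz_u$ along $\ell$, so that $(y,z_u)$ becomes a glued interval (count $-1$). The net change is zero, so the free-segment count is preserved throughout all extensions, yielding $f_n\le k-1$. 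The main obstacle is the color-visibility bookkeeping in the initial bound---verifying that the half-edge convention correctly translates the \cvisibility{} of consecutive $x_i,x_{i+1}$ into the algebraic condition $L_{i+1}\neq L_i$ and then invoking the ham-sandwich balance---together with the case analysis of the extension step, where one must cleanly identify the face split by an $\ell$-crossing extension and track the transient split-then-glue pair.
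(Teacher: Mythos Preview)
Your proposal is correct and follows essentially the same approach as the paper's proof: both track the number of connected components of $\ell\setminus\partial G_M^j$ inside $\conv$ (your ``free segments,'' the paper's $\omega_j$), establish $\omega_0\le k-1$ via the $\conv$-coloring plus the ham-sandwich balance on the left endpoints of the crossing edges, and then show $\omega_j=\omega_{j-1}$ by the split-then-glue accounting during each extension. Your reduction of the \cvisibility{} condition to $L_{i+1}\neq L_i$ is a clean algebraic reformulation of exactly the parity argument the paper gives in words.
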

\begin{proof}
Recall that $C_{M,\ell} = \langle c_1, \ldots, c_k\rangle$ is the sequence of edges of $M$ that intersect $\ell$ and that for every $1\leq i\leq k$, $x_i$ is the intersection point of $c_i$ with $\ell$.
Further recall that $x_0$ and $x_{k+1}$ are the intersections of $\ell$ with $\conv$ and that we assume that $k>0$, as otherwise $M$ would already fulfill the requirements of Lemma~\ref{lemma:CompatibleSequence}.
In~\cite{bichromaticMatchings} it was already observed that if $\ell$ intersects at least one edge of $M$, then it must intersect an even number of edges of $M$.
Moreover, as $\ell$ is a ham-sandwich cut, at each side of $\ell$ the number of red points equals the number of blue points.
Therefore, if we consider the endpoints of the edges in $C_{M, \ell}$ at one side of $\ell$, half of them must be blue and half must be red.
Otherwise, the numbers of remaining red and blue points at that side of $\ell$ would be unbalanced, leading to a contradiction with $M$ being a $BR$-matching. 
Thus, there exists at least one $\xi\in\{1,\ldots,k-1\}$ such that the pair of consecutive edges $c_\xi$ and $c_{\xi+1}$ in $C_{M, \ell}$ has differently colored endpoints at the same side of $\ell$.
By the coloring scheme of the \hedges\ of $M$, $x_\xi$ and $x_{\xi+1}$ are \cvisible{} in $M$.

For $0\leq j\leq n$, let $\omega_j$ be the number of connected components of $\ell \setminus \partial G_M^j$ that lie inside $\conv$.
Observe that inside $\conv$ the number of connected components of $\ell$ intersected by the edges of $M$ is $k+1$.
By Observation~\ref{obs:Consecutive_points_have_different_colors}, $x_i$ is glued with $x_{i+1}$ in the construction of $G_M^0$ if and only if $x_i$ and $x_{i+1}$ are \cvisible.
By the choice of the color of the \hedges\ of $\conv$, $x_0$ is glued with $x_1$.
As argued above, there exists at least one additional pair $x_\xi$ and $x_{\xi+1}$ that is \cvisible{} and thus glued in $G_M^0$.
Hence, $\omega_0$ is at most $k-1$.

In the construction of $G_M^j$, $1\leq j\leq n$, the connected components of $(\ell \setminus \partial G_M^{j-1})\cap\conv$ remain unchanged unless exactly one new edge intersects~$\ell$.
In this case, exactly one connected component gets split into two connected components, of which exactly one connected component is removed in $G_M^{j}$ by gluing its endpoints.
Thus, $\omega_{j}=\omega_{j-1}$ for all $1\leq j\leq n$.

As the faces of $G_M^{n}$ are convex simple polygons, the number of faces of $G_M^{n}$ that cross $\ell$ is equal to~$\omega_{n}$ and thus at most $k-1$.
\end{proof}

\subsection{Switch vertices and switch matchings}

Recall that $G_M^{n}$ decomposes the interior of $\conv$ into convex faces. 
The idea is to assign each point of $P$ to a unique face of this decomposition, such that every face has a balanced number %
of (possibly zero) red and blue points assigned.
This way, we obtain a new $BR$-matching by independently matching the points assigned to each face of this decomposition.

Note that each \hedge\ of $G_M^{n}$ is incident to the interior of a unique face $f$ of $G_M^{n}$.
Therefore, we can think of $\partial f$ to be composed of all the \hedges\ incident to~$\fint(f)$.
Consider the sequence $h_0,\ldots,h_{t-1}$ of the $t\geq 3$ \hedges\ along $\partial f$ in counterclockwise order, i.e., the cycle formed of the $t$ \hedges\ incident to~$\fint(f)$. 

A vertex $v$ of $G_M^{n}$ is a \emph{\svertex} in $f$ if the two \hedges\ $h_i$ and $h_{i+1}$ (with $i\in\{0\ldots t-1\}$ and indices taken modulo $t$) that are incident to~$\fint(f)$ and adjacent to $v$ have different colors; see \figurename~\ref{fig:Switch_vertices}~(b) for an illustration. 
In other words, $v$ is \svertex{} in some face if, in the cyclic order of incident \hedges\ around $v$, two consecutive \hedges\ that are not twins have different color.

\begin{figure}[tb]
\centering
\includegraphics{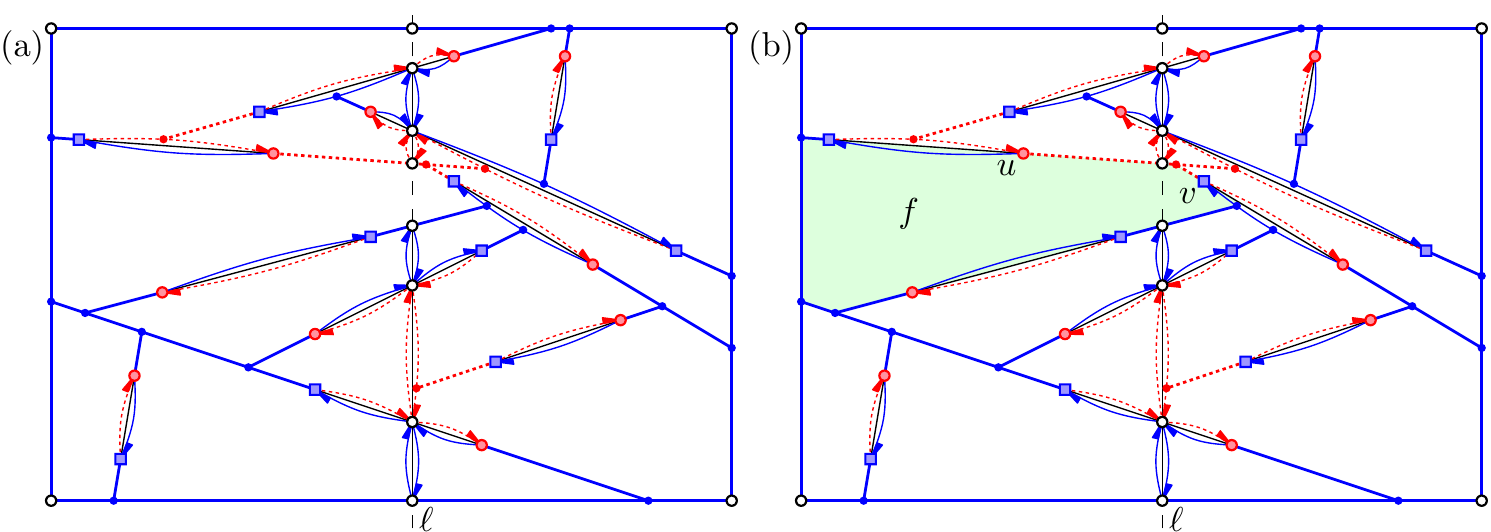}
\caption{\small (a) The convex decomposition of $\conv$ obtained after extending every edge of $M$.
(b) The vertices $u$ and $v$ are the only \svertices{} in the face $f$.}
\label{fig:Switch_vertices}
\end{figure}

\begin{lemma}\label{lemma:Properties_of_switch_vertices}
A vertex of $G_M^{n}$ is a \svertex{} in one of its faces if and only if it is a point of~$P$.
Furthermore, a vertex can be a \svertex{} in at most one face of $G_M^{n}$.
\end{lemma}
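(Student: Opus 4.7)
The plan is to classify the vertices of $G_M^n$ according to how they arose in the construction --- points of $P$, corners of $\conv$, subdivision points of a split, or glue vertices (possibly themselves created by splits) --- and determine for each type in how many faces the vertex is a \svertex. For non-$P$ vertices I will show that the count is always zero, and for $p\in P$ I will show that it is exactly one; this yields both statements of the lemma.

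For non-$P$ vertices, I will prove by induction on the sequence of atomic operations that build $G_M^n$ (initial coloring of $\partial\conv$, splits, glues, and the edge-additions of Step~1 of an extension) the following stronger invariant: at every non-$P$ vertex $v$ and every face $f$ with $v \in \partial f$, the two \hedges\ of $\partial f$ adjacent to $v$ share a color. Corners of $\conv$ start with two incident \hedges\ of the same, uniform color. A split at a point $x$ on an edge $e$ assigns the two new \hedges\ on each side of $e$ at $x$ the color of the original \hedge\ of $e$ on that side, so each of the two faces at $x$ contains two equally colored boundary \hedges\ at $x$. A new edge introduced by a glue, or by the Step~1 edge-addition of an extension at a non-$P$ endpoint, receives on both \hedges\ the color of that endpoint as seen from the other endpoint; by \cvisibility\ in the glue case, and by the definition of Step~2 otherwise, this color equals that of the \hedges\ already on the face-side of the endpoint into which the new edge enters. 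Thus, when the new edge subdivides that face, each of the two resulting subfaces contains at that endpoint one old and one new \hedge\ of the same color, while the face on the opposite side of the endpoint is unaffected. Hence the invariant propagates, and no non-$P$ vertex is a \svertex\ in any face.

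For a vertex $p\in P$, the first step is to show that $p$ has degree exactly two in $G_M^n$, with incident edges the matching edge $pw\in M$ and its extension $pz_p$ produced when $s_j=pw$ is processed. Splits and glues never touch points of $P$, and no other extension ray can reach $p$: such a ray travels along the line through two points of $P$, and general position of $P$ forbids a third $P$-point on that line. Because $pw$ and $pz_p$ are collinear, $p$ lies on the common boundary of exactly two faces of $G_M^n$, one on each side of the supporting line of $pw$. In each of these faces, the boundary \hedges\ at $p$ consist of one \hedge\ of $pw$ and one \hedge\ of $pz_p$. The two \hedges\ of $pw$ carry different colors (those of $p$ and of $w$), whereas both \hedges\ of $pz_p$ share a single color $c$. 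Since $p$ and $w$ have different colors, exactly one of the two faces at $p$ presents a color mismatch, and that is the unique face in which $p$ is a \svertex.

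The main obstacle is the careful bookkeeping in the inductive argument for non-$P$ vertices, in particular the invocation of \cvisibility\ (or the analogous Step~2 coloring for extensions) to certify that every newly added edge inherits exactly the color present on the face-side it enters at each endpoint. A secondary delicate point is the appeal to general position of $P$ to exclude extension rays from terminating at a third collinear $P$-point; this is what pins the degree of every $p\in P$ at two and thus forces $p$ to be a \svertex\ in exactly one face.
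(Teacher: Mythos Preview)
Your proof is correct and follows essentially the same approach as the paper: both arguments track, through each atomic construction step (split, glue, extension), how the colors of the \hedges\ meeting at a vertex evolve, showing that non-$P$ vertices never acquire a color change while each $p\in P$ acquires exactly one after its extension. Your framing in terms of ``the two boundary \hedges\ of each incident face share a color'' is equivalent to the paper's use of the cyclic order $\Delta_q$ of incident \hedges; if anything, you are slightly more explicit than the paper in invoking general position of $P$ to rule out an extension ray terminating at a third collinear point of $P$, which is what fixes $\deg(p)=2$.
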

\begin{proof}
For each point $q$ of $G_M^{n}$ let $\Delta_q$ be the cyclic order of its incident \hedges.

First observe that splitting an edge of a $P$-graph at a point $x$ preserves $\Delta_q$ for all $q\in P\cup Q\setminus \{x\}$.
(Strictly speaking, at the endpoints of the split edge the split \hedges\ get exchanged with the new \hedges.
But as their color stays the same, the cyclic order of the colors of \hedges\ around these points stays the same.)
Further, for the new point $x\in Q$, $\Delta_x$ contains two pairs of consecutive \hedges\ that are not twins, and both pairs consist of equally-colored \hedges.
Hence, the split operation preserves existing \svertices{} and does not create new ones.

Second, let $y\in Q$ be a point that is glued with another point in $Q$.
This means that two equally-colored \hedges\ are inserted between two equally-colored \hedges\ of the same color in $\Delta_y$.
Therefore, no point in $Q$ becomes a \svertex{} by the glue operation.

Third, let $z$ be the point on some edge of the $P$-graph that is first hit by the extension of one side of some edge of $M$.
If not already in $Q$, $z$ gets added to $Q$ by a split operation.
Then, like in the glue operation, two equally-colored \hedges\ are inserted between two equally-colored \hedges\ of the same color in $\Delta_{z}$.
Again, no point in $Q$ becomes a \svertex{} by this operation.

Altogether, no point of $Q$ is turned into a \svertex{} during the construction of $G_M^{n}$.
Further, all points in $Q$ are either points of $\conv$ (whose incident \hedges\ are all of the same color) or created in a split operation.
Therefore, no point in $Q$ is a \svertex.

Concerning the set $P$ recall that each point $p\in P$ is an endpoint of an edge $s=pp'$ of $M$.
As argued above, only the extension of $s$ alters $\Delta_p$ during the construction of $G_M^{n}$.
Before the extension of $s$, each of the endpoints $p$ and $p'$ of $s$ is incident to exactly one twin pair of \hedges\ (where one is colored red and the other one is colored blue).
The extension of $s$ adds two additional \hedges\ to $p$, both of the same color.
Thus, $\Delta_p$ has exactly two pairs of consecutive \hedges\ that are not twins, and for exactly one of them the two \hedges\ differ in color. The same statement holds for $\Delta_{p'}$. 
Note that for all points \mbox{$\tilde{p}\in P\!\setminus\!\{p,p'\}$} this operation preserves $\Delta_{\tilde{p}}$.
Therefore, every point in $P$ is a \svertex{} for exactly one face of $G_M^{n}$.
\end{proof}

\begin{lemma}\label{lemma:Clockwise_coloring}
Let $h_0, \ldots, h_{t-1}$ be the sequence of \hedges\ along the boundary of a face $f$ of $G_M^{n}$ in counterclockwise order.
Let $v_i$ be a \svertex{} in $f$ and let $h_i$ and $h_{i+1}$ (indices taken modulo~$t$) be the two \hedges\ incident to $v_i$.
Then $v_i$ has the same color as $h_i$ while $h_{i+1}$ is of the opposite color.
\end{lemma}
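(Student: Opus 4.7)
The plan is to reduce the statement to a local case analysis at $v_i$. By Lemma~\ref{lemma:Properties_of_switch_vertices}, $v_i \in P$, so $v_i$ is the endpoint of a unique matching edge $s = v_iv_i' \in M$. Arguments identical to those in the proof of Lemma~\ref{lemma:Properties_of_switch_vertices} (in particular, the fact that splitting, gluing, and extensions of edges other than~$s$ do not touch points of~$P$) show that the only edges of $G_M^n$ incident to~$v_i$ are $s$ itself and the extension segment emanating from~$v_i$ along~$\ell_s$ in the direction opposite to~$v_i'$. In particular these two edges are collinear, so $\ell_s$ locally partitions a neighborhood of~$v_i$ into two open half-planes and $f$ lies entirely in one of them.

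Next I would identify $h_i$ and $h_{i+1}$ among the four half-edges incident to~$v_i$. The convention that a half-edge is visible only from its left side, together with the CCW-traversal rule (which forces $h_i$ to have~$v_i$ as its head and $h_{i+1}$ to have~$v_i$ as its tail), pins down which two of the four half-edges lie on $\partial f$ at~$v_i$ and in what order. Keeping these directional conventions straight is the main obstacle of the proof.

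After this setup I would split into two cases according to which side of~$\ell_s$ contains~$f$. In one case, the CCW traversal of $\partial f$ approaches~$v_i$ along~$s$: then $h_i$ is the half-edge of~$s$ whose head is~$v_i$, which by the $BR$-matching coloring rule is colored like~$v_i$, while $h_{i+1}$ is an extension half-edge; the switch-vertex hypothesis then forces $h_{i+1}$ to have the color opposite to~$v_i$. In the symmetric case, the CCW traversal approaches~$v_i$ along the extension: then $h_i$ is an extension half-edge, and $h_{i+1}$ is the half-edge of~$s$ directed from~$v_i$ to~$v_i'$, which is colored like~$v_i'$ and therefore opposite to~$v_i$; since $v_i$ is a switch-vertex, the common color of the extension (and hence of~$h_i$) must then equal the color of~$v_i$.

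In both cases $h_i$ carries the color of~$v_i$ and $h_{i+1}$ carries the opposite color, which is precisely the claim. The underlying reason this works out so uniformly is that $M$-half-edges are colored like their head endpoints while extension half-edges come in monochromatic twin pairs, so once the orientation of the two half-edges on $\partial f$ at~$v_i$ is determined by the CCW-traversal convention, the color conclusion follows immediately from the switch-vertex hypothesis.
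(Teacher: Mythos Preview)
Your proposal is correct and follows essentially the same approach as the paper's proof. Both arguments use Lemma~\ref{lemma:Properties_of_switch_vertices} to conclude $v_i\in P$, observe that the only edges of $G_M^n$ incident to $v_i$ are (the portion of) the matching edge $s$ and its collinear extension, and then split into the two cases where either $h_i$ is the \hedge\ of $s$ directed into $v_i$ or $h_{i+1}$ is the \hedge\ of $s$ directed out of $v_i$; your geometric phrasing (``which side of $\ell_s$ contains $f$'') is exactly equivalent to the paper's phrasing (``$h_i=h^+$'' versus ``$h_{i+1}=h^-$''). One small imprecision: strictly speaking the edge of $G_M^n$ incident to $v_i$ may be only a subsegment of $s$ (if $s\in C_{M,\ell}$ and was split at its intersection with $\ell$), but since splits preserve \hedge\ colors at the endpoints this does not affect your argument.
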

\begin{proof}
By Lemma~\ref{lemma:Properties_of_switch_vertices}, $v_i$ is a point of $P$.
Hence, $v_i$ is the endpoint of an edge~$s$ of~$M$.
Let $s'$ be the part of $s$ (after possible splits) incident to $v_i$ in $G_M^{n}$.
Recall that splitting an edge of a $P$-graph preserves the cyclic order of incident \hedges\ for all points in $P$.
Therefore, the \hedge\ $h^+$ of~$s'$ directed towards $v_i$ has the same
color as $v_i$, and the \hedge\ $h^-$ of $s'$ directed away from~$v_i$ has the
opposite color of $v_i$.

In case that $h_i$ is $h^+$, $h_i$ has the same color as $v_i$ and, since $v_i$ is a \svertex, $h_{i+1}$ must be of the opposite color.
In the other case, where $h_{i+1}$ is $h^-$, $h_{i+1}$ is of the opposite color as $v_i$ and as $v_i$ is a \svertex, $h_i$ must have the same color as $v_i$.
Thus, in both cases the claim in the lemma is true.
\end{proof}

We say that a face $f$ of $G_M^{n}$ is \emph{\wc} if the sequence of \svertices{} along $\partial f$ alternates in color.
Analogously, a $P$-graph is \emph{\wc} if all its faces are \wc. 
Notice that if a face is \wc, then it has an even number of \svertices.

\begin{lemma}\label{lemma:Well-colored faces}
Every face of $G_M^{n}$ is \wc.
\end{lemma}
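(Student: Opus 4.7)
My plan is to directly combine Lemma~\ref{lemma:Properties_of_switch_vertices} and Lemma~\ref{lemma:Clockwise_coloring} to track colors around $\partial f$ as we walk counterclockwise. Concretely, I would fix a face $f$ of $G_M^n$ and let $h_0, h_1, \ldots, h_{t-1}$ be the \hedges{} on $\partial f$ in counterclockwise order, with $v_i$ the vertex shared by $h_i$ and $h_{i+1}$ (all indices taken modulo $t$). Since $G_M^n$ is connected and $f$ is a convex face (bounded in particular by $\conv$), this cyclic sequence is well defined.

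The first observation I would make is that, by the very definition of \svertex, if $v_i$ is \emph{not} a \svertex{} in $f$ then $h_i$ and $h_{i+1}$ have the same color. In other words, the color of the \hedges{} along $\partial f$ can only change as we cross through a \svertex. Next, I would invoke Lemma~\ref{lemma:Clockwise_coloring} to pin down exactly what happens at a \svertex: at every \svertex{} $v_i$, the incoming \hedge{} $h_i$ has the color of $v_i$, while the outgoing \hedge{} $h_{i+1}$ has the opposite color.

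Combining these two facts closes the argument. For any two cyclically consecutive \svertices{} $v$ and $v'$ along $\partial f$ (in counterclockwise order), the outgoing \hedge{} at $v$ has the color opposite to $v$ by Lemma~\ref{lemma:Clockwise_coloring}; this color then propagates unchanged across all intermediate non-\svertices{} until it reaches the incoming \hedge{} at $v'$; and, again by Lemma~\ref{lemma:Clockwise_coloring}, this incoming \hedge{} shares its color with $v'$. Hence $v'$ has the opposite color of $v$, so the \svertices{} alternate in color along $\partial f$, which is the definition of \wc.

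I do not expect a real obstacle here: essentially all the work has already been invested in Lemma~\ref{lemma:Properties_of_switch_vertices} (to identify what the \svertices{} of a face are) and in Lemma~\ref{lemma:Clockwise_coloring} (to match the color of a \svertex{} with that of its adjacent \hedges), and the remaining argument is a straightforward color-bookkeeping walk around $\partial f$. The only minor point to mention is the degenerate possibility that $f$ has no \svertices{} at all, in which case the alternation condition is vacuously satisfied and $f$ is trivially \wc.
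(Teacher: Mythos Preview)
Your proposal is correct and follows essentially the same route as the paper: both arguments walk counterclockwise around $\partial f$, use the definition of a \svertex{} to see that the \hedge{} color is constant between consecutive \svertices, and then apply Lemma~\ref{lemma:Clockwise_coloring} at each \svertex{} to conclude that consecutive \svertices{} have opposite colors. The only cosmetic difference is that the paper phrases this with explicit indices $i<j$ and a without-loss-of-generality color choice, whereas you describe it in terms of ``incoming/outgoing'' \hedges; the logical content is the same, and your mention of the degenerate no-\svertex{} case is a harmless addition.
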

\begin{proof}
Let $h_0, \ldots, h_{t-1}$ be the sequence of \hedges\ along the boundary of a face $f$ of $G_M^{n}$ in counterclockwise order.
For any $0\leq i \leq t-1$, let $v_i$ be the vertex shared by $h_i$ and $h_{i+1}$ (indices taken modulo $t$). 
Recall that $h_i$ and $h_{i+1}$ have different colors if and only if~$v_i$ is a \svertex{} in~$f$.

Let $v_i$ and $v_j$ be two consecutive \svertices{} along $\partial f$ such that~$i<j<t$.
Assume without loss of generality that $v_i$ is red.
Therefore, Lemma~\ref{lemma:Clockwise_coloring} implies that $h_i$ is red whereas $h_{i+1}$ is blue. 
Because $v_i$ and $v_j$ are consecutive \svertices{} along $\partial f$, for every $i < r < j$, $v_r$ is not a \svertex. 
Thus, $h_{i+1}, \ldots, h_{j}$ share the same color, i.e., they are blue.
Because $v_j$ is a \svertex, $h_j$ and $h_{j+1}$ have different colors, which implies that $h_{j+1}$ is red.
Since $h_j$ is blue and $h_{j+1}$ is red, we infer from Lemma~\ref{lemma:Clockwise_coloring} that $v_j$ is blue. 
Therefore, $v_i$ and $v_j$ have different colors, i.e., two consecutive \svertices{} along $\partial f$ alternate in color, which implies that $f$ is \wc.
\end{proof}

Let $f$ be a \wc{} face of $G_M^{n}$ and let $P_f$ be the set of \svertices{} of $f$.
A \emph{\smatching} $M_f$ of $f$ is a $BR$-matching on $P_f$ such that every edge of $M_f$ is contained in $f$ (or on $\partial f$).
Since $f$ is \wc, the sequence of \svertices{} along $\partial f$ alternates in color.
Moreover, since $f$ is a convex simple polygon, we can obtain $M_f$ by connecting consecutive \svertices{} along $\partial f$.
That is, every face of $G_M^{n}$ admits a \smatching.

Recall that a vertex is a \svertex{} in exactly one face of $G_M^{n}$ by Lemma~\ref{lemma:Properties_of_switch_vertices}.
Therefore, as every face of $G_M^{n}$ is \wc{} by Lemma~\ref{lemma:Well-colored faces}, we can obtain a $BR$-matching compatible with $M$ by taking the union of the \smatchings{} of every face in $G_M^{n}$. 
However, this $BR$-matching may have more crossings with $\ell$ than $M$, so we need to be careful when matching the \svertices{} of~$G_M^{n}$.

\begin{figure}[tb]
\centering
\includegraphics{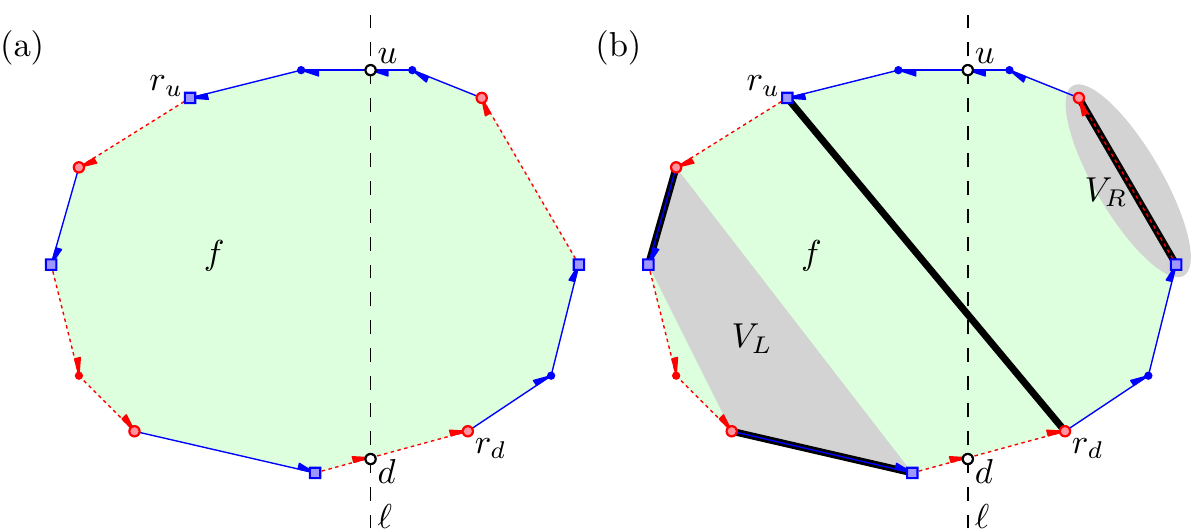}
\caption{\small (a) A \wc{} face $f$ of $G_M^{n}$ whose boundary intersects $\ell$ at points $u$ and $d$.
The vertices $r_u$ and $r_d$ are the first \svertices{} encountered when walking along the boundary of~$f$ counterclockwise from $u$ and $d$, respectively.
(b) The sets $V_L$ and $V_R$ contain the \svertices{} of $f$ lying to the left and right, respectively, of $r_dr_u$.
Moreover, the convex hulls of $V_L$ and $V_R$ are contained to the left and right, respectively, of $\ell$.
($V_R$ contains only two points and thus its convex hull has no area.)
The bold edges exemplify one \smatching{} $M_f$.}
\label{fig:Switch-matchings}
\end{figure}

\begin{lemma}\label{lemma:Switch-matchings}
Let $f$ be a \wc{} face of $G_M^{n}$ that crosses $\ell$.
There exists a \smatching{} $M_f$ on the \svertices{} of $f$ such that at most one edge of $M_f$ intersects $\ell$.
\end{lemma}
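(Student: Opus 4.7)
The plan is to define $M_f$ by pairing consecutive switch-vertices in the cyclic order along $\partial f$. Since $f$ is convex and crosses $\ell$, the boundary $\partial f$ meets $\ell$ in exactly two points $u$ and $d$, and these split the switch-vertices of $f$ into the sets $V_L$ and $V_R$ on the two sides of $\ell$. Walking $\partial f$ counterclockwise starting from $d$, let the switch-vertices in the order they are encountered be $v_0, v_1, \ldots, v_{2k-1}$, so that $v_0, \ldots, v_{a-1} \in V_L$ and $v_a, \ldots, v_{2k-1} \in V_R$, where $a = |V_L|$. By Lemma~\ref{lemma:Well-colored faces} the face $f$ is well-colored, hence these switch-vertices alternate in color, which also forces their total count $2k$ to be even.

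I would then set
\[
M_f \;:=\; \{v_0 v_1,\; v_2 v_3,\; \ldots,\; v_{2k-2} v_{2k-1}\}
\]
and verify the three requirements for a switch-matching. Every edge is bichromatic by color alternation; every edge lies in $f$ (or on $\partial f$) because $f$ is convex and both endpoints lie on $\partial f$; and the edges are pairwise non-crossing because the pairs $\{v_{2i}, v_{2i+1}\}$ are non-interleaving in the cyclic order on a convex polygon. Thus $M_f$ is a valid $BR$-matching on $P_f$ with edges inside $f$.

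The heart of the lemma is then counting crossings with $\ell$. An edge $v_{2i} v_{2i+1}$ crosses $\ell$ precisely when its two endpoints lie on opposite sides of $\ell$, which by our ordering requires $v_{2i}$ to be the last element of $V_L$ and $v_{2i+1}$ the first element of $V_R$; equivalently, $2i = a-1$. This forces $a$ to be odd and is possible for at most one value of $i$, so at most one edge of $M_f$ intersects $\ell$, as required.

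I do not foresee a serious obstacle beyond making the setup precise: the key observation is simply that along the cyclic sequence $v_0, \ldots, v_{2k-1}$ the side-partition changes only once (from $V_L$ to $V_R$ at index $a$, and implicitly back at index $2k$), so the canonical even pairing can produce at most one edge spanning both sides of $\ell$. Degenerate cases (no switch-vertices at all, or $V_L$ or $V_R$ empty) are absorbed automatically: in each of them the relation $|V_L| + |V_R| = 2k$ forces $a$ to be even, so no edge of $M_f$ crosses $\ell$ and the bound holds trivially.
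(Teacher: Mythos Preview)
Your proof is correct, and it takes a genuinely different route from the paper's argument.

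The paper does \emph{not} use the consecutive-pairing matching. Instead it invokes the \invariant{} (Lemma~\ref{lemma:Color_invariant_preserved}) to deduce that the two points $u$ and $d$ where $\ell$ meets $\partial f$ have different colors when viewed from each other. From this, via Lemma~\ref{lemma:Clockwise_coloring}, the first \svertices{} $r_u$ and $r_d$ encountered counterclockwise from $u$ and $d$ are forced to have different colors. The paper then matches $r_u$ with $r_d$ explicitly (this is the single edge allowed to cross $\ell$) and, using that $r_dr_u$ is bichromatic, observes that the remaining \svertices{} split into two balanced sets $V_L,V_R$ whose convex hulls avoid $\ell$; each is matched arbitrarily within its hull.

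Your argument sidesteps the \invariant{} entirely: you just pair $v_0v_1, v_2v_3,\ldots$ in the linear order obtained by cutting the cyclic sequence at $d$, and note that the $V_L/V_R$ transition is crossed by at most one such pair. This is shorter and shows that the lemma holds for \emph{any} \wc{} convex face crossed by a line, without the extra structural hypothesis that the two boundary-intersection points with $\ell$ be non-\cvisible. The paper's approach, by contrast, makes explicit \emph{which} edge crosses $\ell$ and why it is bichromatic, and guarantees (via the \invariant) that exactly one crossing occurs whenever the face is nontrivially split by $\ell$; this extra information is not needed for the lemma as stated but clarifies where the parity argument in Lemma~\ref{lemma:Compatible matching with fewer crossings} ultimately comes from. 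One small cosmetic point: the hypothesis already assumes $f$ is \wc, so your citation of Lemma~\ref{lemma:Well-colored faces} is redundant.
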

\begin{proof}
Since $f$ is a convex polygon, $\ell$ intersects $\partial f$ in exactly two points $u$ and $d$.
Assume without loss of generality that $u$ lies above $d$; see \figurename~\ref{fig:Switch-matchings}~(a).
Notice that $u$ and $d$ are visible points in $G_M^{n}$ lying on the line $\ell$. 
Because the \invariant{} holds in $G_M^{n}$ by Lemma~\ref{lemma:Color_invariant_preserved}, $u$ and $d$ are not \cvisible.
So, without loss of generality, assume that $u$ is blue when viewed from $d$ and hence that $d$ is red when viewed from $u$.
Walk counterclockwise from $u$ and $d$ along $\partial f$ and let $r_u$ and $r_d$, respectively, be the first \svertex{} reached along this walk.
By Lemma~\ref{lemma:Clockwise_coloring}, we know that $r_u$ is blue whereas $r_d$ is red.

Recall that we want to construct a \smatching{} $M_f$ of $f$.
Let $V_L$ and $V_R$ be the sets of \svertices{} in $f$ that lie to the left and right, respectively, of the supporting line of $r_dr_u$, directed from $r_d$ to $r_u$.
Let $\pi\in\{L,R\}$.
Because $r_dr_u$ is a bichromatic edge, $V_\pi$ contains an even number of \svertices, half of them red and half of them blue.
As $V_\pi$ is a set in convex position, there exists a $BR$-matching on $V_\pi$.
Further, the convex hull of $V_\pi$ does not intersect $\ell$; see \figurename~\ref{fig:Switch-matchings}~(b).
Thus, for each $BR$-matching $M_\pi$ on $V_\pi$ no edge intersects $\ell$.

We obtain a \smatching{} $M_f$ of $f$ by taking the union of the edges of $M_L$ and $M_R$, and adding the edge $r_dr_u$, which is the only edge in $M_f$ intersecting $\ell$.
\end{proof}

\subsection{Putting things together}\label{section:Avoiding_ham-sandwich_cuts}

We proceed by showing how to obtain a $BR$-matching $M'$ on $P$ such that $M'$ and $G_M^{n}$ are compatible (and hence, $M'$ and $M$ are compatible) and $M'$ has fewer edges intersecting~$\ell$ than $M$ has. 
Recall that $C_{M,\ell}$ is the sequence of edges of $M$ that intersect $\ell$.
 
\begin{lemma}\label{lemma:Compatible matching with fewer crossings}
Let $M$ be a $BR$-matching on $P$ and let $\ell$ be a ham-sandwich cut of $P$.
There exists a $BR$-matching $M'$ compatible with $M$ such that $|C_{M', \ell}| \leq |C_{M, \ell}| - 2$.
\end{lemma}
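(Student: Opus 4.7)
The plan is to assemble the tools from the preceding subsections into the final construction. All the heavy machinery is already in place: from $M$ we built the convex decomposition $G_M^n$; Lemma~\ref{lemma:Properties_of_switch_vertices} tells us that every point of $P$ appears as a \svertex{} in exactly one face of $G_M^n$; Lemma~\ref{lemma:Well-colored faces} tells us each face is \wc; Lemma~\ref{lemma:Switch-matchings} produces, for every face crossing $\ell$, a \smatching{} using at most one edge across $\ell$; and Lemma~\ref{lemma:Few_faces_crossing_ell} bounds the number of faces crossing $\ell$ by $k-1$, where $k=|C_{M,\ell}|$.

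First I would define $M'$ face by face. For every face $f$ of $G_M^n$ that does not cross $\ell$, choose an arbitrary \smatching{} $M_f$ of $f$ (e.g.\ the one obtained by pairing consecutive \svertices{} along $\partial f$, which exists because $f$ is \wc); no edge of $M_f$ intersects $\ell$. For every face $f$ that crosses $\ell$, apply Lemma~\ref{lemma:Switch-matchings} to obtain a \smatching{} $M_f$ with at most one edge meeting $\ell$. Setting $M' := \bigcup_f M_f$, Lemma~\ref{lemma:Properties_of_switch_vertices} guarantees that the $M_f$'s collectively match every point of $P$ exactly once, so $M'$ is a $BR$-matching. Since each edge of $M'$ lies inside (or on the boundary of) a face of $G_M^n$, the union $M\cup M'\subseteq G_M^n\cup M'$ is plane, i.e., $M'$ is compatible with $M$.

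For the crossing count, only faces crossing $\ell$ can contribute edges to $C_{M',\ell}$, and each contributes at most one by Lemma~\ref{lemma:Switch-matchings}. Together with Lemma~\ref{lemma:Few_faces_crossing_ell} this gives the bound $|C_{M',\ell}|\leq k-1$. The last step I would make is the parity argument that has already been recorded inside the proof of Lemma~\ref{lemma:Few_faces_crossing_ell}: since $\ell$ is a ham-sandwich cut and $M'$ is a $BR$-matching, the number $|C_{M',\ell}|$ of edges of $M'$ crossed by $\ell$ must be even, and likewise $k$ is even; hence the odd upper bound $k-1$ can be tightened to $k-2$, yielding $|C_{M',\ell}|\leq|C_{M,\ell}|-2$.

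I do not expect a real obstacle here: the subsections have been structured exactly so that this proof is a book-keeping exercise. The only subtle point is to remember to invoke the parity observation to convert $k-1$ into $k-2$; forgetting it would only prove the weaker bound $|C_{M',\ell}|\leq |C_{M,\ell}|-1$ and thus break the inductive argument needed to derive Lemma~\ref{lemma:CompatibleSequence} (iterating this lemma $\lfloor n/2\rfloor$ times to exhaust all crossings).
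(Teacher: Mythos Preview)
Your proposal is correct and follows essentially the same route as the paper: build $M'$ as the union of \smatchings{} over all faces of $G_M^n$, use Lemma~\ref{lemma:Few_faces_crossing_ell} together with Lemma~\ref{lemma:Switch-matchings} to bound $|C_{M',\ell}|$ by $k-1$, and then invoke the parity of $|C_{M',\ell}|$ to sharpen this to $k-2$. The only cosmetic difference is that the paper phrases the per-face choice as ``a \smatching{} with the minimum number of intersections with $\ell$'' rather than splitting into the two cases (crossing vs.\ non-crossing), and it cites~\cite{bichromaticMatchings} directly for the parity fact rather than pointing back to the proof of Lemma~\ref{lemma:Few_faces_crossing_ell}.
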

\begin{proof}
For each face $f$ of $G_M^{n}$, consider a \smatching{} $M_f$ on the \svertices{} of $f$, such that the edges of $M_f$ have the minimum number of intersections with $\ell$.
Let $M'$ be the $BR$-matching which is the union of the edges of all these \smatchings{} $M_f$ for all faces $f$ of $G_M^{n}$.
Because every \smatching{} $M_f$ is contained in its respective face $f$, $M'$ and $G_M^{n}$ are compatible.
Moreover, since $M$ is contained in the boundary of $G_M^{n}$, $M'$ and $M$ are compatible.

Observe that edges of $M_f$ can intersect $\ell$ only if $f$ crosses $\ell$.
By Lemma~\ref{lemma:Few_faces_crossing_ell}, there are at most $k-1$ faces of $G_M^{n}$ that cross $\ell$, where $k = |C_{M, \ell}|$.
Furthermore, by Lemma~\ref{lemma:Switch-matchings}, each of these faces admits a \smatching{} having at most one edge intersecting $\ell$.
Therefore, $M'$ contains at most $k-1$ edges that intersect $\ell$.
However, every $BR$-matching must have an even number of edges that intersect $\ell$~\cite{bichromaticMatchings}.
Therefore, $M'$ contains at most $k-2$ edges that intersect $\ell$, proving our result.
\end{proof}

We are now ready to provide the proof of Lemma~\ref{lemma:CompatibleSequence} which is restated below.

\vspace{.15in}
\noindent \textbf{Lemma~\ref{lemma:CompatibleSequence}.}\emph{
Let $P= B\cup R$ be a bichromatic set of $2n$ points in the plane in general position such that $|B|=|R|=n$. 
For every $BR$-matching $M$ and every ham-sandwich cut $\ell$ of $P$, there exists a $BR$-matching $M^\ell$ such that $M$ and $M^\ell$ are $\lfloor n/2 \rfloor$-compatible and no edge of $M^\ell$ intersects~$\ell$.
}
\begin{proof}
Let $M_0 = M$ and $k=|C_{M, \ell}|$.
We know from Lemma~\ref{lemma:Compatible matching with fewer crossings} that for each $BR$-matching $M_i$ with $|C_{M_i, \ell}| > 0$ there exists a $BR$-matching $M_{i+1}$, such that $M_i$ and $M_{i+1}$ are compatible and $|C_{M_{i+1}, \ell}| \leq |C_{M_i, \ell}| - 2$.
Hence, there exists a transformation $M = M_0, \ldots, M_t = M^\ell$ of length~$t$ between $M$ and $M^\ell$, where $M^\ell$ contains no edge intersecting $\ell$.
As $|C_{M_{i+1}, \ell}| \leq |C_{M_i, \ell}| - 2$, for $0\leq i\leq t-1$, we conclude that $t \leq k/2 \leq n/2$, i.e., $M$ and $M^\ell$ are $\lfloor n/2 \rfloor$-compatible.
\end{proof}

{\small
\bibliographystyle{abbrv}
\bibliography{bichromaticMatchings}
}

\end{document}